\documentclass[12pt,a4paper]{article}
\usepackage[utf8]{inputenc}
\usepackage[T1]{fontenc}
\usepackage[english]{babel}
\usepackage[margin=20mm]{geometry}
\usepackage{graphicx}
\usepackage[pdftex,hidelinks]{hyperref}
\usepackage{amsmath,amssymb,amsthm}
\usepackage{mathabx}
\usepackage[numbers,sort&compress]{natbib}

\newtheorem{proposition}{Proposition}
\newtheorem{example}{Example}
\newtheorem{remark}{Remark}
\newtheorem{lemma}{Lemma}
\newtheorem{corollary}{Corollary}

\begin{document}

\title{Impact of Courant number on the results of numerical simulating of signal propagation in non-dispersive homogeneous media\footnote{This document is an English translation of the original article published in \textit{Proceedings of the Komi Science Centre of the Ural Branch of the Russian Academy of Sciences. Series ``Physical and Mathematical Sciences''}. The original publication is available at DOI: \href{https://dx.doi.org/10.19110/1994-5655-2024-5-73-83}{10.19110/1994-5655-2024-5-73-83}.}}
\author{P.A. Makarov\footnote{makarovpa@ipm.komisc.ru}, R.N. Skandakov, V.A. Ustyugov, V.I. Shcheglov}
\date{}

\maketitle

\abstract{
The paper is devoted to the study of the connection between the numerical dispersion arising in FDTD modeling of electromagnetic signal propagation in nondispersive homogeneous media optically different from vacuum and the Courant number in the 2D case.
The main results are formulated in the form of four statements, as well as a number of corollaries and remarks that determine the nature of the numerical dispersion, the optimal value of the Courant number and the limitations of the method.
It is proved that the optimal choice of the Courant number eliminates the numerical dispersion and extends the capabilities of the developed numerical algorithm to media, which refractive index lesser than refractive index of vacuum, as well as media with negative refraction.\\
\noindent\textbf{Keywords:} \textit{electrodynamics, simulation, FDTD method, numerical experiment}
}

\section*{Introduction}
\label{sect:Intro}

Numerical methods for solving wave equations play an important role not only in specific engineering applications, but also in fundamental science as a whole.
The FDTD (Finite-Difference Time-Domain) method~\cite{Yee:1966} belongs to such methods, and some of its features are the subject of the present work.

The main advantage of the FDTD method is the simplicity of the implementation of the computational algorithm.
This is precisely what determines the widespread use of FDTD in a wide variety of applications: biology and medicine~\cite{Miyazaki:2009,Tan:2013,Stark:2016,Nzao:2022}, ecology, geology and mineralogy~\cite{Glubokovskikh:2016,Yu:2017}, optics, photonics, electronics, communications and telecommunications~\cite{Fantoni:2017,Mishra:2019,Mohanty:2019,Bakirtzis:2021,Makarov:2022a,Makarov:2022b}.
In addition to numerous papers related in one way or another to the FDTD method, there is also extensive educational literature on this topic~\cite{Schneider:2010,Inan:2011,Taflove:2013,Langtangen:2017}.

Despite its long history of development, researchers' attention continues to be attracted by the fundamental foundations of the FDTD method.
Among these foundations is the issue of assessing the correctness of solutions obtained by the FDTD method in various problem formulations for signals with different spectral shapes~\cite{Schneider:2010,Langtangen:2017,Makarov:2023}.

It is well known (see, for example, the textbooks~\cite{Schneider:2010,Inan:2011,Taflove:2013,Langtangen:2017}) that the main parameter governing the accuracy of FDTD computations is the Courant number, which for the 2D case (1 spatial + 1 temporal dimension) has the form
\begin{equation}\label{eq:Courant-Num-Def}
  S_\mathrm{c} = \frac{c\Delta_t}{\Delta_x}
\end{equation}
and combines the main physical parameter~$c$ --- the speed of light in vacuum --- with the numerical parameters of the problem~$\Delta_x$ and~$\Delta_t$, which determine the space-time discretization step.

\begin{remark}
The quality of the numerical solution is determined not only by the choice of the Courant number~$S_\mathrm{c}$, but also by the requirements on the spectral composition of the signal, as well as by the level of the initial current of its source (see Statement~2 in our previous work~\cite{Makarov:2023}).
The influence of the latter two factors was precisely the subject of paper~\cite{Makarov:2023}, in which the Courant number was chosen optimally for modeling electromagnetic processes in vacuum, namely, it was set \mbox{$S_\mathrm{c}=1$}.
The present study is a logical continuation of~\cite{Makarov:2023}.
\end{remark}

\section{Motivation and aim of the work}
\label{sect:Motivation_and_Aim}

The choice \mbox{$S_\mathrm{c}=1$} does not, in general, ensure the quality of the resulting numerical solution in homogeneous nondispersive media optically different from vacuum.
This can be easily seen by examining Figs.~\ref{fig:1} and~\ref{fig:2}.

\begin{figure}
\center
\includegraphics[width=0.75\linewidth]{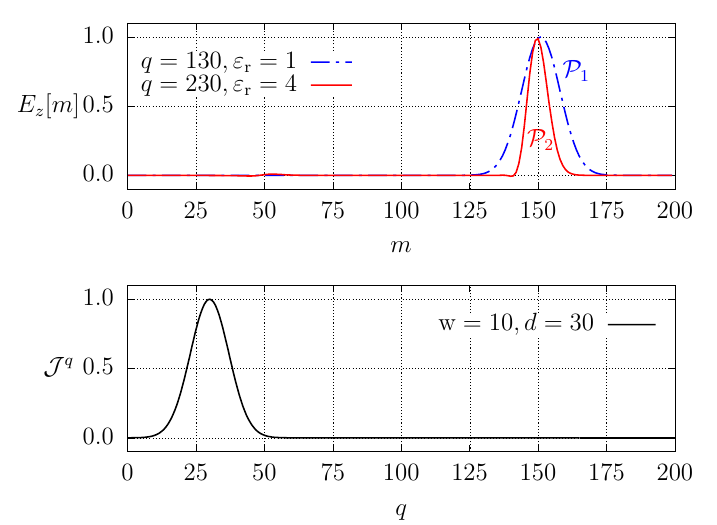}
\caption{Simulating of the propagation of Gaussian form pulses in vacuum~$\mathcal{P}_1$ and dielectric~$\mathcal{P}_2$ with relative permittivity \mbox{$\varepsilon_\mathrm{r}=4$}.}
\label{fig:1}
\end{figure}

Figs.~\ref{fig:1} and~\ref{fig:2} are constructed in accordance with the Yee algorithm, which was discussed in detail by us in the previous work~\cite{Makarov:2023} (see formulas (13), (19)--(21) therein), with the difference that now the update equations (19)--(20) explicitly take into account the material parameters of the medium (namely, its dielectric permittivity~$\varepsilon_\mathrm{r}$ and magnetic permeability~$\mu_\mathrm{r}$) in which the signals propagate:
\begin{equation}\label{eq:Maxwell-Hy}
H_y^{q+\frac{1}{2}}\left[m+\frac{1}{2}\right] =
H_y^{q-\frac{1}{2}}\left[m+\frac{1}{2}\right] +
\frac{S_\mathrm{c}}{\eta\mu_\mathrm{r}}\left(
E_z^q[m+1] - E_z^q[m]\right),
\end{equation}
\begin{equation}\label{eq:Maxwell-Ez}
E_z^{q+1}[m] = E_z^{q}[m] - \mathcal{J}^{q+\frac{1}{2}}[m] +
\frac{S_\mathrm{c}\eta}{\varepsilon_\mathrm{r}}\left(
H_y^{q+\frac{1}{2}}\left[m+\frac{1}{2}\right] -
H_y^{q+\frac{1}{2}}\left[m-\frac{1}{2}\right]\right),
\end{equation}
and the simplest absorbing boundary conditions~(21) from~\cite{Makarov:2023} are replaced by us with more general equations constructed on the basis of second-order accurate differential equations for the advection of the electromagnetic field.
All the details of the implementation of this scheme are described in detail in~\cite{Makarov:2023} and the textbook~\cite{Schneider:2010}.

\begin{remark}
Modeling the operation of a directional current source in the TF/SF formalism in this article has also undergone some changes compared to~\cite{Makarov:2023} and reduces to computing the fields according to the scheme
\begin{equation}\label{eq:Source-Hy}
H_y^{q+\frac12}\!\left[s-\frac12\right] = H_y^{q-\frac12}\!\left[s-\frac12\right] -
\frac{S_\mathrm{c}}{\eta\mu_\mathrm{r}} E_z^\mathrm{inc}\left[0,q\right],
\end{equation}
\begin{equation}\label{eq:Source-Ez}
E_z^{q+1}[s] = E_z^{q}[s] +
\frac{S_\mathrm{c}}{\sqrt{\varepsilon_\mathrm{r}\mu_\mathrm{r}}}
E_z^\mathrm{inc}\!\left[-\frac12,q+\frac12\right].
\end{equation}
Here (as in~\cite{Makarov:2023}) \mbox{$s=50$} is the fixed grid node number in all numerical experiments, specifying the spatial location of the point antenna that forms the incident field~$E_z^\mathrm{inc}$.
Computing the electric~$E_z[s]$ and magnetic~$H_y\left[s-\frac12\right]$ fields according to~\eqref{eq:Source-Ez}, \eqref{eq:Source-Hy} for a given type of incident wave~$E_z^\mathrm{inc}$ allows one to simulate the operation of the current source~$\mathcal{J}^q$ that forms the wave radiated into the right region of the grid \mbox{$m\geqslant s$}.
Hereinafter, by the operation of the current source~$\mathcal{J}^q$ we mean exactly this scheme.
\end{remark}

\begin{figure}
\center
\includegraphics[width=0.75\linewidth]{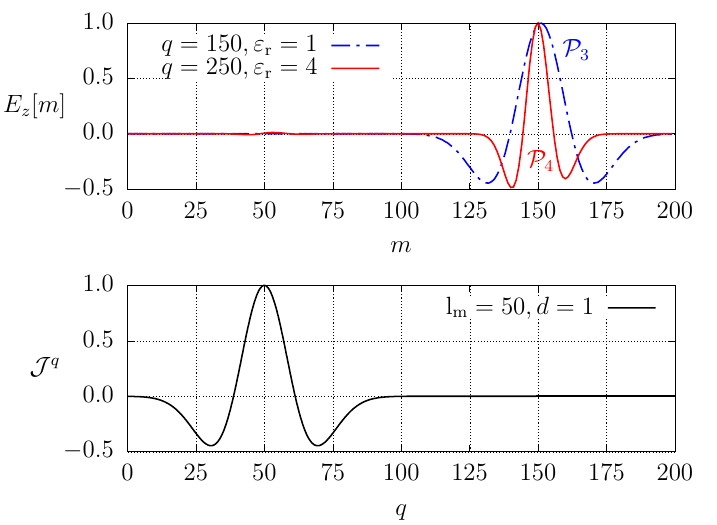}
\caption{Simulating of pulse propagation in the form of a Ricker wavelet in vacuum~$\mathcal{P}_3$ and dielectric~$\mathcal{P}_4$ with relative permittivity \mbox{$\varepsilon_\mathrm{r}=4$}.}
\label{fig:2}
\end{figure}

The parameters of the current sources~$\mathcal{J}^q$ that form the signals shown in Figs.~\ref{fig:1} and~\ref{fig:2} are also described in detail in~\cite{Makarov:2023} (see formulas (32)--(34) and the corresponding text therein), and are chosen such that the pulses~$\mathcal{P}_1$ and~$\mathcal{P}_3$ propagating in vacuum can be considered a correct numerical solution of the problem in the sense of Definition~2 given in~\cite{Makarov:2023}.
Proceeding from this, the pulses~$\mathcal{P}_1$ and~$\mathcal{P}_3$ can be considered ``reference'' ones, comparing with which the pulses~$\mathcal{P}_2$ and~$\mathcal{P}_4$, respectively, one can easily assess the influence of the medium on the correctness of the numerical solution.

Figs.~\ref{fig:1} and~\ref{fig:2} are constructed for the same value of the Courant number \mbox{$S_\mathrm{c}=1$}, for which the source~\cite{Schneider:2010} states that it minimizes FDTD numerical errors, and the textbook~\cite{Langtangen:2017} claims that such a choice allows one to obtain an exact solution of the problem.

At the same time, it is obvious that the FDTD solutions depicted by the pulses~$\mathcal{P}_2$ in Fig.~\ref{fig:1} and~$\mathcal{P}_4$ in Fig.~\ref{fig:2} are not correct in the sense of Definition~2 given by us in~\cite{Makarov:2023}, which is in direct contradiction with what was noted in the previous paragraph.
This incorrectness is the result of a phenomenon that occurs everywhere in FDTD numerical calculations and is known in the literature~\cite{Schneider:2010,Langtangen:2017} as ``numerical dispersion''.

Indeed, the signals~$\mathcal{P}_2$ in Fig.~\ref{fig:1} and~$\mathcal{P}_4$ in Fig.~\ref{fig:2} do not represent either the original Gaussian pulse or the Ricker wavelet, respectively.
During the propagation of these wave packets in a dielectric with \mbox{$\varepsilon_\mathrm{r}=4$} over a sufficiently long time (\mbox{$\Delta q=230$} and~$250$, respectively), their shape is significantly distorted (note that this effect is rather weak for the parameters used and begins to manifest itself clearly only towards the end of the simulation).
This contradicts the initial mathematical model of the phenomenon we are simulating, since the medium was assumed to be nondispersive.
This obvious contradiction raises a natural question --- can the FDTD method be used at all to obtain correct simulation results for signal propagation in nondispersive materials?
And if this is possible, then how does the choice of the Courant number~$S_\mathrm{c}$ affect the correctness of the obtained solutions?

Despite the contradiction noted above, the FDTD method was previously used by us in a number of works that considered either wave propagation in randomly inhomogeneous media~\cite{Makarov:2022a,Makarov:2022b}, or the features of the solution of the homogeneous and inhomogeneous Cauchy problems by the FDTD method~\cite{Makarov:2023}.
In both the first and the second case, all problems associated with numerical dispersion were completely ignored, partly because in all these works a significant part of the path along which the propagation of electromagnetic signals was considered was air space (modeled by us as indistinguishable from vacuum).
At the same time, such an ignoring of this aspect of the matter cannot be fully justified, which requires, if not a complete correction of numerical dispersion, then at least a clarification of the effects associated with it.

Further, note that in the book~\cite{Schneider:2010} it is postulated without any proof that the value \mbox{$S_\mathrm{c}=1$} is the maximum possible, which, generally speaking, does not correspond at all to the meaning of definition~\eqref{eq:Courant-Num-Def}, which does not impose any restrictions on the arbitrarily chosen space-time grid steps~$\Delta_x$ and~$\Delta_t$, and the ratio between them.
The source~\cite{Langtangen:2017} more cautiously states that the choice \mbox{$S_\mathrm{c}>1$} leads to an exponential growth of noise caused by rounding errors, which are always present in numerical simulation, which ultimately completely destroys the obtained solution.

\begin{remark}
It should also be noted that in the works~\cite{Schneider:2010} and~\cite{Langtangen:2017} different quantities are called the Courant numbers, which creates additional difficulties in understanding all the noted issues.
This is because in~\cite{Schneider:2010} in definition~\eqref{eq:Courant-Num-Def} the quantity~$c$ is the speed of light in vacuum (as is also adopted by us in this work), while in~\cite{Langtangen:2017} it is considered that~$c$ is the speed of wave propagation in a given specific medium.
\end{remark}

In addition to the problems already noted, the question also arises about the possibility of applying the FDTD method for modeling electrodynamics in ``not the most ordinary media'' (even if the phenomenon of dispersion is neglected).
The following examples are meant here: the propagation of radio waves in the Earth's ionosphere~\cite{Vinogradova}, electromagnetic waves of the terahertz or X-ray range in conductors, semiconductors or dielectrics~\cite{Kugushev,Bredov}, media with negative values of relative permittivities (in particular, ``left-handed'' media~\cite{Shuster,Veselago,Pendry,Agranovich}).
In all these cases, the dispersion of electromagnetic waves plays a decisive role, the anomalous nature of which from the mathematical point of view consists in the fact that the permittivity~$\varepsilon_\mathrm{r}$ and permeability~$\mu_\mathrm{r}$ can take values less than unity, and even more so --- negative ones.

At the same time, the usual algorithms found in the literature on the FDTD method~\cite{Yee:1966,Schneider:2010,Inan:2011,Taflove:2013,Langtangen:2017,Makarov:2022a,Makarov:2022b,Makarov:2023} do not allow one to simulate the propagation of electromagnetic waves in such media simply by setting the values \mbox{$0<\varepsilon_\mathrm{r},\mu_\mathrm{r}<1$} and \mbox{$\varepsilon_\mathrm{r},\mu_\mathrm{r}<0$}.

The present article is devoted to solving the noted issues, the aim of which is thus to consider the numerical dispersion of the FDTD method and to assess the influence of the choice of the Courant number values on the quality of modeling the propagation of signals in homogeneous nondispersive media.

\section{Numerical dispersion in the FDTD method}

To obtain an expression describing the numerical dispersion in the Yee grid, let us return to the original discrete analogs of the Ampere~\eqref{eq:Maxwell-Ez} and Faraday~\eqref{eq:Maxwell-Hy} equations, which in a more compact and convenient form can be written using the shift operators in the space-time grid~$\widehat{\mathcal{S}}_x^\chi$ and~$\widehat{\mathcal{S}}_t^\tau$ (here~$\chi$ and~$\tau$ are shift parameters having the form \mbox{$\chi,\tau=p/2,\,\forall p\in\mathbb{Z}$}).
By definition, the action of these operators has the form
\begin{equation}\label{eq:Sx-Def}
\widehat{\mathcal{S}}_x^\chi: \, \widehat{\mathcal{S}}_x^\chi \psi^q[m] = \psi^q\left[m+\chi\right],
\end{equation}
\begin{equation}\label{eq:St-Def}
\widehat{\mathcal{S}}_t^\tau: \, \widehat{\mathcal{S}}_t^\tau \psi^q[m] = \psi^{q+\tau}[m],
\end{equation}
where~$\psi$ denotes an arbitrary component of the electromagnetic field (in the 2D case considered by us~--- $E_z$ or~$H_y$).

It is easy to verify that with the help of the operators~\eqref{eq:Sx-Def} and~\eqref{eq:St-Def} the Ampere equation~\eqref{eq:Maxwell-Ez} in the absence of extraneous sources (\mbox{$\mathcal{J}=0$}) can be written in the form
\begin{equation}\label{eq:Ampere-Law-Discrete-1}
\begin{aligned}
\widehat{\mathcal{S}}_t^{\frac12} \varepsilon
\left( \frac{\widehat{\mathcal{S}}_t^{\frac12} -
\widehat{\mathcal{S}}_t^{-\frac12}}{\Delta_t} \right) E_z^q[m] =
\widehat{\mathcal{S}}_t^{\frac12}
\left( \frac{\widehat{\mathcal{S}}_x^{\frac12} -
\widehat{\mathcal{S}}_x^{-\frac12}}{\Delta_x} \right) H_y^q[m].
\end{aligned}
\end{equation}
Here we take into account the definition of the Courant number~\eqref{eq:Courant-Num-Def}, the relation \mbox{$c=1/\sqrt{\varepsilon_0\mu_0}$} of the speed of light in vacuum with its dielectric permittivity~$\varepsilon_0$ and magnetic permeability~$\mu_0$ and the expression for the characteristic impedance of vacuum \mbox{$\eta=\sqrt{\mu_0/\varepsilon_0}\approx120\pi$} (for reference see, for example, the textbooks~\cite{Bredov,Kugushev}).
In addition, when writing~\eqref{eq:Ampere-Law-Discrete-1}, the notation for the absolute dielectric permittivity of the medium \mbox{$\varepsilon=\varepsilon_\mathrm{r}\varepsilon_0$} is used.

For convenience, let us also define the finite-difference operators~$\tilde{\partial}_i$ according to
\begin{equation}\label{eq:Fin-Diff-Operator-Def}
\tilde{\partial}_i =
\frac{\widehat{\mathcal{S}}_i^{\frac12} - \widehat{\mathcal{S}}_i^{-\frac12}}{\Delta_i},
\end{equation}
where~$i$ is either~$x$ or~$t$.
With the help of~\eqref{eq:Fin-Diff-Operator-Def} the Ampere law in a stationary medium without extraneous currents~\eqref{eq:Ampere-Law-Discrete-1} can finally be written in the Yee form~\cite{Schneider:2010}
\begin{equation}\label{eq:Ampere-Law-Yee}
\varepsilon \widehat{\mathcal{S}}_t^{\frac12} \tilde{\partial}_t E_z^q[m] =
\widehat{\mathcal{S}}_t^{\frac12} \tilde{\partial}_x H_y^q[m].
\end{equation}
In the same way, using~\eqref{eq:Sx-Def}, \eqref{eq:St-Def} and~\eqref{eq:Fin-Diff-Operator-Def} the discrete analog of the Faraday law~\eqref{eq:Maxwell-Hy} is reduced to the Yee form
\begin{equation}\label{eq:Faraday-Law-Yee}
\mu \widehat{\mathcal{S}}_x^{\frac12} \tilde{\partial}_t H_y^q[m] =
\widehat{\mathcal{S}}_x^{\frac12} \tilde{\partial}_x E_z^q[m].
\end{equation}

Now consider a plane monochromatic wave of frequency~$\omega$ propagating in the Yee grid to the right (\mbox{$m\geqslant s$})
\begin{equation}\label{eq:Plane-Wave-Ez}
E_z^q[m] = E_0 \, e^{i(\omega q\Delta_t - \tilde{\beta}m\Delta_x)},
\end{equation}
\begin{equation}\label{eq:Plane-Wave-Hy}
H_y^q[m] = H_0 \, e^{i(\omega q\Delta_t - \tilde{\beta}m\Delta_x)},
\end{equation}
where~$\tilde{\beta}$ is the wavenumber, i.e., the propagation constant of a plane monochromatic wave in the FDTD grid (different from the corresponding constant~$\beta$ in continuous space), and~$E_0$ and~$H_0$ are the complex amplitudes of the electric and magnetic field intensities.

\begin{lemma}\label{Lemma:1}
The action of the operators~\eqref{eq:Fin-Diff-Operator-Def} on an arbitrary component~$\psi$ of a plane monochromatic wave~\eqref{eq:Plane-Wave-Ez}, \eqref{eq:Plane-Wave-Hy} reduces to
\begin{equation}\label{eq:Fin-Diff-Operator-Dt}
\tilde{\partial}_t \psi =
i \frac{2}{\Delta_t} \sin \left( \frac{\omega\Delta_t}{2} \right) \psi,
\end{equation}
\begin{equation}\label{eq:Fin-Diff-Operator-Dx}
\tilde{\partial}_x \psi =
-i \frac{2}{\Delta_x} \sin \left( \frac{\tilde{\beta}\Delta_x}{2} \right) \psi.
\end{equation}
\end{lemma}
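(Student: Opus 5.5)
The plan is to compute directly from the definitions~\eqref{eq:Sx-Def}, \eqref{eq:St-Def} and~\eqref{eq:Fin-Diff-Operator-Def}. The starting observation is that, for a plane monochromatic wave of the form~\eqref{eq:Plane-Wave-Ez}, \eqref{eq:Plane-Wave-Hy}, the shift operators act by multiplication. Indeed, writing $\psi^q[m]=\psi_0\,e^{i(\omega q\Delta_t-\tilde{\beta}m\Delta_x)}$ with $\psi_0$ equal to $E_0$ or $H_0$, the definition~\eqref{eq:St-Def} gives
\begin{equation*}
\widehat{\mathcal{S}}_t^{\tau}\psi^q[m]=\psi_0\,e^{i(\omega (q+\tau)\Delta_t-\tilde{\beta}m\Delta_x)}=e^{i\omega\tau\Delta_t}\,\psi^q[m].
\end{equation*}
Likewise, \eqref{eq:Sx-Def} gives $\widehat{\mathcal{S}}_x^{\chi}\psi^q[m]=e^{-i\tilde{\beta}\chi\Delta_x}\psi^q[m]$. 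In other words, $\psi$ is a common eigenfunction of all the shift operators. These formulas hold for half-integer $\tau,\chi$ as well, since the plane-wave expression is defined for any real index.

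Next I substitute $\tau=\pm\frac12$ into~\eqref{eq:Fin-Diff-Operator-Def} with $i=t$. This yields
\begin{equation*}
\tilde{\partial}_t\psi=\frac{e^{i\omega\Delta_t/2}-e^{-i\omega\Delta_t/2}}{\Delta_t}\,\psi.
\end{equation*}
Euler's identity $e^{i\theta}-e^{-i\theta}=2i\sin\theta$ then gives~\eqref{eq:Fin-Diff-Operator-Dt}.

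For the spatial operator I take $\chi=\pm\frac12$, which gives
\begin{equation*}
\tilde{\partial}_x\psi=\frac{e^{-i\tilde{\beta}\Delta_x/2}-e^{i\tilde{\beta}\Delta_x/2}}{\Delta_x}\,\psi=-\frac{2i}{\Delta_x}\sin\left(\frac{\tilde{\beta}\Delta_x}{2}\right)\psi.
\end{equation*}
This is~\eqref{eq:Fin-Diff-Operator-Dx}. The minus sign comes solely from the minus in front of $\tilde{\beta}m\Delta_x$ in the exponent, which describes a rightward-travelling wave.

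There is no real obstacle in this proof. The only point requiring care is the sign convention in the spatial shift. A secondary point is that the two field components sit on staggered half-integer nodes, so one should note that the eigenvalue relations do not depend on where $\psi$ is sampled, and the same factor therefore applies to both $E_z$ and $H_y$.
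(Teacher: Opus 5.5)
Your proposal is correct and follows essentially the paper's own proof. The paper likewise records that the half-step shift operators act on the plane wave as multiplication by $e^{\pm i\omega\Delta_t/2}$ and $e^{\mp i\tilde{\beta}\Delta_x/2}$, and then obtains both formulas elementarily, as you do with Euler's identity.
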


\begin{proof}
It is verified by direct substitution of~\eqref{eq:Plane-Wave-Ez}, \eqref{eq:Plane-Wave-Hy} into~\eqref{eq:Fin-Diff-Operator-Def} taking into account~\eqref{eq:Sx-Def}, \eqref{eq:St-Def}.
Let us write out here in explicit form only the action on~$\psi$ of the shift operators~\eqref{eq:Sx-Def}, \eqref{eq:St-Def} with parameters \mbox{$\chi,\tau=\pm\frac12$}
\begin{equation}\label{eq:Sx-St-Plane-Wave}
\widehat{\mathcal{S}}_t^{\pm\frac12} \psi =
e^{\pm i\omega\Delta_t/2} \psi,\quad
\widehat{\mathcal{S}}_x^{\pm\frac12} \psi =
e^{\mp i\tilde{\beta}\Delta_x/2} \psi,
\end{equation}
with the help of which the equalities~\eqref{eq:Fin-Diff-Operator-Dt} and~\eqref{eq:Fin-Diff-Operator-Dx} are obtained elementarily.
\end{proof}

\begin{proposition}
The dispersion relation for the Yee grid can be represented in the form
\begin{equation}\label{eq:Yee-Dispersion-Eq}
\sin \left( \frac{\omega\Delta_t}{2} \right) =
\frac{\Delta_t}{\sqrt{\varepsilon\mu}\Delta_x}
\sin \left( \frac{\tilde{\beta}\Delta_x}{2} \right).
\end{equation}
\end{proposition}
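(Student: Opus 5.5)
The plan is to substitute the plane-wave ansatz \eqref{eq:Plane-Wave-Ez}, \eqref{eq:Plane-Wave-Hy} into the Yee-form Ampere law \eqref{eq:Ampere-Law-Yee} and Faraday law \eqref{eq:Faraday-Law-Yee}. This reduces each difference equation to an algebraic relation between the amplitudes $E_0$ and $H_0$. We then require that the resulting homogeneous $2\times 2$ system have a nontrivial solution.

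First, we remove the outer shift operators. The factor $\widehat{\mathcal{S}}_t^{1/2}$ appears on both sides of \eqref{eq:Ampere-Law-Yee}, and $\widehat{\mathcal{S}}_x^{1/2}$ on both sides of \eqref{eq:Faraday-Law-Yee}. By \eqref{eq:Sx-St-Plane-Wave}, each acts on a plane wave as multiplication by a nonzero scalar ($e^{i\omega\Delta_t/2}$ and $e^{-i\tilde\beta\Delta_x/2}$ respectively), so it can be cancelled. The operators $\tilde\partial_t$ and $\tilde\partial_x$ are linear combinations of shifts, so all these operators commute on the exponentials and the order of action does not matter.

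Next, we apply Lemma~\ref{Lemma:1}. Writing $\Omega=\frac{2}{\Delta_t}\sin(\omega\Delta_t/2)$ and $K=\frac{2}{\Delta_x}\sin(\tilde\beta\Delta_x/2)$, and dividing out the common exponential factor, the two laws become
\[
i\varepsilon\Omega E_0=-iK H_0,\qquad i\mu\Omega H_0=-iK E_0 .
\]
Multiplying these side by side gives $\varepsilon\mu\Omega^2 E_0H_0=K^2E_0H_0$. Equivalently, the determinant of the system must vanish: $\varepsilon\mu\Omega^2-K^2=0$. For a nontrivial wave, $E_0$ and $H_0$ are both nonzero, since either equation forces the other amplitude to vanish unless $\Omega=K=0$. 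Hence $\varepsilon\mu\Omega^2=K^2$.

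Finally, we take the square root and substitute $\Omega$ and $K$, obtaining $\sin(\omega\Delta_t/2)=\pm\frac{\Delta_t}{\sqrt{\varepsilon\mu}\Delta_x}\sin(\tilde\beta\Delta_x/2)$. The sign must then be fixed. This is the only point requiring care, together with keeping track of the spatial-shift sign conventions in \eqref{eq:Sx-St-Plane-Wave}. For a right-propagating wave ($\omega>0$, $\tilde\beta>0$) with $\varepsilon\mu>0$, the sign is $+$. This matches the continuum limit $\omega=\tilde\beta/\sqrt{\varepsilon\mu}$ as $\Delta_t,\Delta_x\to0$, and corresponds to $H_0=-\sqrt{\varepsilon/\mu}\,E_0$, which is consistent with the relative sign convention in \eqref{eq:Maxwell-Hy}, \eqref{eq:Maxwell-Ez}. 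The opposite sign describes the left-going branch, or can be absorbed into $\tilde\beta\to-\tilde\beta$. Choosing the $+$ branch yields \eqref{eq:Yee-Dispersion-Eq}.
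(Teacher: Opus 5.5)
Your proof is correct and follows essentially the same route as the paper: substitute the plane wave into the Yee-form Ampere and Faraday laws, reduce both via Lemma~\ref{Lemma:1} to amplitude relations, eliminate $E_0,H_0$ to obtain the squared relation, and take the square root. The only differences are small improvements. You multiply the two relations (equivalently, set the determinant to zero) instead of equating the impedance ratios \eqref{eq:FDTD-Impedance-1} and \eqref{eq:FDTD-Impedance-2}, which avoids dividing by possibly vanishing sines, and you explicitly justify the $+$ branch of the square root for the right-propagating wave, which the paper leaves implicit.
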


\begin{proof}
Using the results~\eqref{eq:Fin-Diff-Operator-Dt} and~\eqref{eq:Fin-Diff-Operator-Dx} of Lemma~\ref{Lemma:1}, as well as~\eqref{eq:Sx-St-Plane-Wave} in the Ampere law~\eqref{eq:Ampere-Law-Yee} for a plane monochromatic wave, we write
\begin{equation}
\begin{aligned}
i \varepsilon \frac{2}{\Delta_t} \sin \left( \frac{\omega\Delta_t}{2} \right)
e^{i\omega\Delta_t/2} E_z^q[m] =
-i \frac{2}{\Delta_x} \sin \left( \frac{\tilde{\beta}\Delta_x}{2} \right)
e^{i\omega\Delta_t/2} H_y^q[m].
\end{aligned}
\end{equation}
Substituting into the last equality the explicit expressions for the fields~\eqref{eq:Plane-Wave-Ez}, \eqref{eq:Plane-Wave-Hy} for a plane monochromatic wave and canceling common factors, we obtain
\begin{equation}
\varepsilon \frac{1}{\Delta_t} \sin \left( \frac{\omega\Delta_t}{2} \right) E_0 =
-\frac{1}{\Delta_x} \sin \left( \frac{\tilde{\beta}\Delta_x}{2} \right) H_0.
\end{equation}
Hence we arrive at the expression for the numerical impedance in the Yee grid
\begin{equation}\label{eq:FDTD-Impedance-1}
\frac{E_0}{H_0} = -\frac{\Delta_t}{\varepsilon\Delta_x} \cdot
\frac{\sin \cfrac{\tilde{\beta}\Delta_x}{2}}
{\sin \cfrac{\omega\Delta_t}{2}}.
\end{equation}
Performing similar actions with respect to the Faraday law~\eqref{eq:Faraday-Law-Yee}, we sequentially obtain
\begin{equation}
\begin{aligned}
&i \mu \frac{2}{\Delta_t} \sin \left( \frac{\omega\Delta_t}{2} \right)
e^{-i\tilde{\beta}\Delta_x/2} H_y^q[m] =
-i \frac{2}{\Delta_x} \sin \left( \frac{\tilde{\beta}\Delta_x}{2} \right)
e^{-i\tilde{\beta}\Delta_x/2} E_z^q[m],
\end{aligned}
\end{equation}
\begin{equation}
\mu \frac{1}{\Delta_t} \sin \left( \frac{\omega\Delta_t}{2} \right) H_0 =
-\frac{1}{\Delta_x} \sin \left( \frac{\tilde{\beta}\Delta_x}{2} \right) E_0
\end{equation}
and the corresponding impedance in the form
\begin{equation}\label{eq:FDTD-Impedance-2}
\frac{E_0}{H_0} = -\frac{\mu\Delta_x}{\Delta_t} \cdot
\frac{\sin \cfrac{\omega\Delta_t}{2}}
{\sin \cfrac{\tilde{\beta}\Delta_x}{2}}.
\end{equation}
Equating the right-hand sides of~\eqref{eq:FDTD-Impedance-1} and~\eqref{eq:FDTD-Impedance-2}, and performing cross-multiplication of the factors in the resulting equality, we obtain
\begin{equation}
\sin^2 \left( \frac{\omega\Delta_t}{2} \right) =
\frac{\Delta_t^2}{\varepsilon\mu\Delta_x^2}
\sin^2 \left( \frac{\tilde{\beta}\Delta_x}{2} \right).
\end{equation}
Taking the square root in the last equality, we finally arrive at~\eqref{eq:Yee-Dispersion-Eq}, which completes the proof.
\end{proof}

\begin{remark}
The dispersion relation~\eqref{eq:Yee-Dispersion-Eq} of the FDTD method differs significantly from its continuous analog, which has the form~\cite{Bredov,Kugushev}
\begin{equation}\label{eq:Dispersion-Classic}
\beta = \omega \sqrt{\varepsilon \mu}.
\end{equation}
At the same time, we note that this difference becomes vanishingly small with sufficiently fine space-time discretization.
Indeed, retaining in the Taylor series expansions of the sine at~$\Delta_t$ and \mbox{$\Delta_x\rightarrow0$} only the first-order terms, one can easily obtain from~\eqref{eq:Yee-Dispersion-Eq}
\begin{equation}\label{eq:Yee-Dispersion-DtDx0}
\tilde{\beta} = \omega \sqrt{\varepsilon \mu}.
\end{equation}
\end{remark}

Let us emphasize, however, that~\eqref{eq:Yee-Dispersion-DtDx0} is valid only in the limit of infinitely fine discretization \mbox{$\Delta_t,\Delta_x\rightarrow0$}.
In the general case of finite space-time discretization, the phase velocities of the wave in the Yee FDTD grid~$\tilde{c}_\mathrm{p}$ and in continuous space~$c_\mathrm{p}$ will be different.

\begin{proposition}\label{Stmnt:Phase-Velocity}
The deviation of the phase velocity of the wave in the Yee grid from the corresponding value in the continuous case can be described by the equality
\begin{equation}
\label{eq:FDTD-Dispersion}
\frac{\tilde{c}_\mathrm{p}}{c_\mathrm{p}} =
\frac{\pi\sqrt{\varepsilon_\mathrm{r}\mu_\mathrm{r}}}
{N_\lambda\arcsin\left[\cfrac{\sqrt{\varepsilon_\mathrm{r}\mu_\mathrm{r}}}{S_\mathrm{c}} \,
\sin\!\left(\cfrac{\pi S_\mathrm{c}}{N_\lambda}\right)\right]},
\end{equation}
where the parameter~$N_\lambda$ is the number of spatial grid nodes per wavelength in free space
\begin{equation}\label{eq:Lambda-Discretization}
\lambda = N_\lambda\Delta_x.
\end{equation}
\end{proposition}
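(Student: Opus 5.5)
The plan is to solve the Yee dispersion relation~\eqref{eq:Yee-Dispersion-Eq} from the previous Proposition explicitly for the numerical wavenumber~$\tilde{\beta}$. We then form the ratio of phase velocities. Throughout we write $\varepsilon=\varepsilon_\mathrm{r}\varepsilon_0$, $\mu=\mu_\mathrm{r}\mu_0$ and $c=1/\sqrt{\varepsilon_0\mu_0}$, so that
\begin{equation*}
\frac{\Delta_t}{\sqrt{\varepsilon\mu}\,\Delta_x}=\frac{c\Delta_t}{\sqrt{\varepsilon_\mathrm{r}\mu_\mathrm{r}}\,\Delta_x}=\frac{S_\mathrm{c}}{\sqrt{\varepsilon_\mathrm{r}\mu_\mathrm{r}}}
\end{equation*}
by definition~\eqref{eq:Courant-Num-Def}. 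The dispersion relation then becomes $\sin(\tilde{\beta}\Delta_x/2)=\frac{\sqrt{\varepsilon_\mathrm{r}\mu_\mathrm{r}}}{S_\mathrm{c}}\sin(\omega\Delta_t/2)$.

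Next we express the temporal argument through $N_\lambda$. The free-space wavelength is $\lambda=2\pi c/\omega$, so by~\eqref{eq:Lambda-Discretization} we have $\omega=2\pi c/(N_\lambda\Delta_x)$. Hence $\omega\Delta_t/2=\pi c\Delta_t/(N_\lambda\Delta_x)=\pi S_\mathrm{c}/N_\lambda$. Taking the principal branch of the arcsine, which corresponds to the right-travelling wave~\eqref{eq:Plane-Wave-Ez} with $\tilde{\beta}>0$ for small arguments, we obtain
\begin{equation*}
\tilde{\beta}=\frac{2}{\Delta_x}\arcsin\!\left[\frac{\sqrt{\varepsilon_\mathrm{r}\mu_\mathrm{r}}}{S_\mathrm{c}}\sin\!\left(\frac{\pi S_\mathrm{c}}{N_\lambda}\right)\right].
\end{equation*}

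We then take the phase velocities $\tilde{c}_\mathrm{p}=\omega/\tilde{\beta}$ and $c_\mathrm{p}=\omega/\beta=1/\sqrt{\varepsilon\mu}=c/\sqrt{\varepsilon_\mathrm{r}\mu_\mathrm{r}}$, the latter from~\eqref{eq:Dispersion-Classic}. Their ratio is $\tilde{c}_\mathrm{p}/c_\mathrm{p}=\omega\sqrt{\varepsilon_\mathrm{r}\mu_\mathrm{r}}/(c\tilde{\beta})$. Substituting $\omega/c=2\pi/(N_\lambda\Delta_x)$ and the expression for~$\tilde{\beta}$, the factors $2/\Delta_x$ cancel and~\eqref{eq:FDTD-Dispersion} follows.

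The only delicate point is the inversion of the sine. The arcsine is real only when its argument lies in $[-1,1]$; otherwise $\tilde{\beta}$ becomes complex, giving an evanescent or unstable numerical wave. The branch choice must also be justified. I would state that the formula is understood on the principal branch under the condition $\frac{\sqrt{\varepsilon_\mathrm{r}\mu_\mathrm{r}}}{S_\mathrm{c}}|\sin(\pi S_\mathrm{c}/N_\lambda)|\leqslant1$. This is consistent with the small-step limit, where the ratio tends to~1, as in~\eqref{eq:Yee-Dispersion-DtDx0}. The general case can be deferred to later statements.
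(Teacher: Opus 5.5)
Your proposal is correct and follows essentially the same route as the paper. You rewrite the Yee dispersion relation using $\Delta_t/(\sqrt{\varepsilon\mu}\,\Delta_x)=S_\mathrm{c}/\sqrt{\varepsilon_\mathrm{r}\mu_\mathrm{r}}$ and $\omega\Delta_t/2=\pi S_\mathrm{c}/N_\lambda$, invert the sine to get $\tilde{\beta}\Delta_x/2$, and compare with $\beta\Delta_x/2=\pi\sqrt{\varepsilon_\mathrm{r}\mu_\mathrm{r}}/N_\lambda$. The paper phrases that last step as $\beta/\tilde{\beta}$, which equals your $\omega\sqrt{\varepsilon_\mathrm{r}\mu_\mathrm{r}}/(c\tilde{\beta})$. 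Your extra care about the principal branch and the condition $|\varphi|\leqslant 1$ is a sensible refinement that the paper only takes up afterwards, in its remark on passbands and stopbands of the Yee grid.
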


\begin{proof}
We use the relation of the phase velocity with the wavenumber in continuous space~\cite{Bredov,Kugushev} and in the FDTD grid
\begin{equation}
c_\mathrm{p} = \frac{\omega}{\beta},\quad
\tilde{c}_\mathrm{p} = \frac{\omega}{\tilde{\beta}},
\end{equation}
which allows us to reduce the left-hand side of equality~\eqref{eq:FDTD-Dispersion} to the form
\begin{equation}\label{eq:Velocities-Relation-1}
\frac{\tilde{c}_\mathrm{p}}{c_\mathrm{p}} =
\frac{\beta}{\tilde{\beta}} = \frac{\frac{\beta\Delta_x}2}{\frac{\tilde{\beta}\Delta_x}2}.
\end{equation}
Further, we apply~\eqref{eq:Dispersion-Classic} to the numerator of the last equality
\begin{equation}\label{eq:Beta-1}
\beta = \omega \sqrt{\varepsilon\mu} = 2 \pi \frac{c}{\lambda}
\sqrt{\varepsilon_0\varepsilon_\mathrm{r}\mu_0\mu_\mathrm{r}} =
\frac{2 \pi}{\lambda} \sqrt{\varepsilon_\mathrm{r}\mu_\mathrm{r}},
\end{equation}
where the wavelength in vacuum~$\lambda$ is discretized according to~\eqref{eq:Lambda-Discretization}, which gives
\begin{equation}
\frac{\beta\Delta_x}2 = \frac{\pi\sqrt{\varepsilon_\mathrm{r}\mu_\mathrm{r}}}{N_\lambda}.
\end{equation}
Applying now to the denominator of the right-hand side of~\eqref{eq:Velocities-Relation-1} the dispersion relation~\eqref{eq:Yee-Dispersion-Eq}, in which transformations similar to~\eqref{eq:Beta-1} are used, as well as the definition of the Courant number~\eqref{eq:Courant-Num-Def} together with~\eqref{eq:Lambda-Discretization}, we obtain
\begin{equation}
\frac{\tilde{\beta}\Delta_x}2 = \arcsin
\left[ \frac{\sqrt{\varepsilon_\mathrm{r}\mu_\mathrm{r}}}{S_\mathrm{c}}
\sin \left(\frac{\pi S_\mathrm{c}}{N_\lambda}\right) \right].
\end{equation}
Substitution of the last two equalities into~\eqref{eq:Velocities-Relation-1} leads to the result~\eqref{eq:FDTD-Dispersion}, which completes the proof.
\end{proof}

Relation~\eqref{eq:FDTD-Dispersion} within the framework of the restrictions considered in this article (according to which nondispersive homogeneous media are investigated, for which both~$\varepsilon_\mathrm{r}$ and~$\mu_\mathrm{r}$ are some real constants), has an obvious meaning in the following domain of definition of the parameters:
\begin{equation}\label{eq:Definition-Area}
N_\lambda \in \mathbb{N} \backslash 1,\quad
\varepsilon_\mathrm{r}, \mu_\mathrm{r}, S_\mathrm{c} \in
(0,+\infty) \subset \mathbb{R},
\end{equation}
where neither the set of natural numbers greater than unity~$\mathbb{N}\backslash1$, nor the set of positive real numbers are considered to contain the element~$+\infty$ itself.
Going beyond the domain of definition~\eqref{eq:Definition-Area} requires a separate detailed study, and we will return to it in the last section of this article.

Before proceeding to the discussion of the features of expression~\eqref{eq:FDTD-Dispersion}, we note that the dielectric and magnetic permittivities of the medium enter it only in the combination
\begin{equation}
n_\mathrm{r} = \sqrt{\varepsilon_\mathrm{r}\mu_\mathrm{r}},
\end{equation}
which has the physical meaning of the relative refractive index of the medium.

\begin{example}
Consider the propagation of a plane monochromatic wave with \mbox{$N_\lambda=10$} in glass with a refractive index \mbox{$n_\mathrm{r}=1.5$} in the case of the Courant number \mbox{$S_\mathrm{c}=1$}.
The ratio~\eqref{eq:FDTD-Dispersion} in this case is \mbox{$\tilde{c}_\mathrm{p}/c_\mathrm{p}\approx0.9777$}, which in percentage terms amounts to a numerical error of about $2.23\%$.
Thus, in this situation, for each unit of path equal to the wavelength, the FDTD calculation accumulates a significant phase error of the order of $8.03^\circ$.
Note also that improving the wavelength discretization by a factor of two (\mbox{$N_\lambda=20$}) reduces the corresponding errors to the values~$0.48\%$ and~$1.89^\circ$ (i.e., approximately fourfold), as should happen for a second-order accurate computational method.
\end{example}

\begin{figure}
\center
\includegraphics[width=0.75\linewidth]{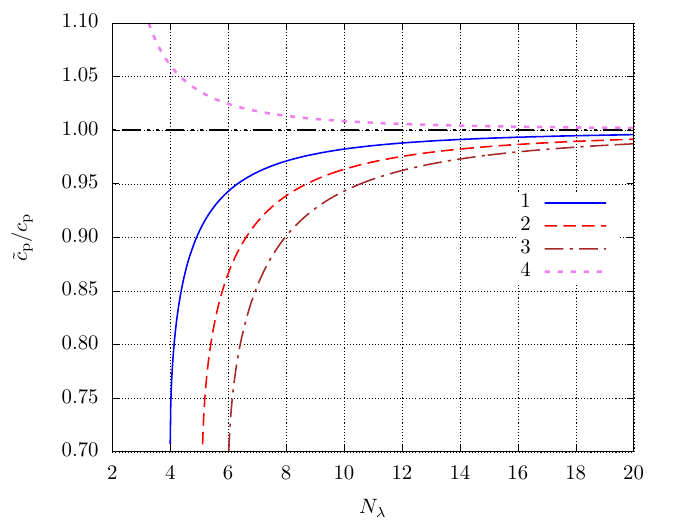}
\caption{Dependence of the phase velocity ratio, determined according to the FDTD method with respect to its exact value~(\ref{eq:FDTD-Dispersion}), on the wavelength discretization parameter~$N_\lambda$ for some media with relative refractive indices~$n_\mathrm{r}$. The Courant number \mbox{$S_\mathrm{c}=1$}.}
\label{fig:3}
\end{figure}

The features of the numerical dispersion of the FDTD method, determined by expression~\eqref{eq:FDTD-Dispersion}, and complementing the example given above, are presented in Fig.~\ref{fig:3}.
This figure shows a family of curves for which \mbox{$n_\mathrm{r}>S_\mathrm{c}$} (namely, \mbox{$n_\mathrm{r}=\sqrt{2}$} for curve~$1$, \mbox{$n_\mathrm{r}=\sqrt{3}$}~--- for line~$2$, and \mbox{$n_\mathrm{r}=2$}~--- for~$3$, respectively).
It can be seen that an increase in the refractive index of the medium with poor wavelength discretization leads to a significant lag of the wave simulated by the FDTD method, which is expressed in a significant deviation of the ratio~$\tilde{c}_\mathrm{p}/c_\mathrm{p}$ from unity.

In addition, Fig.~\ref{fig:3} shows curve~$4$, corresponding to the case \mbox{$n_\mathrm{r}<S_\mathrm{c}$} (in this particular case, the value \mbox{$n_\mathrm{r}=\sqrt{1/2}$} is chosen).
This example demonstrates that in media optically less dense than vacuum, the FDTD calculation leads to the propagation of the simulated waves with a lead compared to the true velocity.

\begin{corollary}
It can be seen from Fig.~\ref{fig:3} that the accuracy of the FDTD calculation decreases with decreasing~$N_\lambda$, and also~--- as the permittivity of the medium deviates from unity (which is the case for both dielectrics, and magnets, and magnetic dielectrics).
\end{corollary}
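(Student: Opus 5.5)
The corollary is read off Fig.~\ref{fig:3}, but I would back it with an analytic argument based on the closed form~\eqref{eq:FDTD-Dispersion} of Proposition~\ref{Stmnt:Phase-Velocity}. Set $S_\mathrm{c}=1$ as in the figure, write $x=\pi/N_\lambda\in(0,\pi/2]$, and define
\[
R(x,n_\mathrm{r})=\frac{n_\mathrm{r}x}{\arcsin\left(n_\mathrm{r}\sin x\right)},
\]
which is well defined whenever $n_\mathrm{r}\sin x\leqslant 1$. The claim then splits into two parts. First, for fixed $n_\mathrm{r}\neq1$, the deviation $|R-1|$ grows as $x$ grows, i.e.\ as $N_\lambda$ decreases. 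Second, for fixed $x$, $|R-1|$ grows as $n_\mathrm{r}$ moves away from $1$ on either side.

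The key reduction is to the auxiliary function $f(y)=\arcsin(y)/y$ on $(0,1]$. It is strictly increasing, with $f(0^+)=1$, since its Taylor series $1+y^2/6+\dots$ has positive coefficients. In terms of $f$ and the sine,
\[
R=\frac{n_\mathrm{r}x}{n_\mathrm{r}\sin x\,f(n_\mathrm{r}\sin x)}=\frac{x/\sin x}{f(n_\mathrm{r}\sin x)}=\frac{f_s(x)}{f(n_\mathrm{r}\sin x)},
\]
where $f_s(x)=x/\sin x$ can be written as $f(\sin x)$, because $\arcsin(\sin x)=x$ for $x\leqslant\pi/2$. So $R=f(\sin x)/f(n_\mathrm{r}\sin x)$. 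Since $f$ is increasing, $R<1$ for $n_\mathrm{r}>1$ (the wave lags) and $R>1$ for $n_\mathrm{r}<1$ (the wave leads), and $R=1$ exactly when $n_\mathrm{r}=1$. This also explains curve~$4$.

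For the dependence on $n_\mathrm{r}$ at fixed $x$, $R$ is a fixed number divided by an increasing function of $n_\mathrm{r}$. Hence $R$ decreases monotonically in $n_\mathrm{r}$ and crosses $1$ at $n_\mathrm{r}=1$, so $|R-1|$ grows as $n_\mathrm{r}$ departs from unity.

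For the dependence on $N_\lambda$ I would use $g=\ln f$ and $u=\ln\sin x$, so that
\[
\ln R=g(e^{u})-g(e^{u+\ln n_\mathrm{r}}).
\]
Monotonicity of $|\ln R|$ in $x$ follows once $h(u)=g(e^u)$ is shown to be convex, because then the increment of $h$ over a fixed shift $\ln n_\mathrm{r}$ grows with $u$. Since $u$ increases with $x$, and $x$ increases as $N_\lambda$ decreases, this gives the first part. Convexity of $h$ is the main obstacle. I would attempt it through the series of $\arcsin(y)/y$ in $y^2=e^{2u}$: the function $\ln\sum a_k e^{2ku}$ with $a_k>0$ is convex by the Cauchy--Schwarz (log-sum-exp) inequality, which settles the step. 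The restriction $n_\mathrm{r}\sin x\leqslant1$ marks where the formula stops being meaningful, and that case is left to the later sections.
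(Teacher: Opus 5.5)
Your argument is correct, and it takes a genuinely different route from the paper. The paper gives no derivation: the corollary is simply read off the four plotted curves of Fig.~\ref{fig:3} ($S_\mathrm{c}=1$, $n_\mathrm{r}=\sqrt2,\sqrt3,2,\sqrt{1/2}$). You instead prove it analytically from~\eqref{eq:FDTD-Dispersion}.

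The factorization $R=f(\sin x)/f(n_\mathrm{r}\sin x)$, with $f(y)=\arcsin(y)/y$, does most of the work:
\begin{itemize}
\item Monotonicity of $f$ gives the lag/lead dichotomy (curves~1--3 versus curve~4). It also shows that $|R-1|$ grows as $n_\mathrm{r}$ moves away from $1$.
\item Log-convexity of $u\mapsto f(e^u)=\sum_k a_k e^{2ku}$, with $a_k>0$, shows that $|\ln R|$ increases as $N_\lambda$ decreases. Since $R$ stays on one side of $1$, so does $|R-1|$.
\end{itemize}
What this buys over the figure is a result valid for every $n_\mathrm{r}$ and every $N_\lambda\geqslant2$ in the passband $n_\mathrm{r}\sin(\pi/N_\lambda)\leqslant1$, not just a few sampled curves. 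The figure, in turn, shows the size of the effect, which your qualitative argument does not.

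Two small points to tidy up:
\begin{itemize}
\item Plain convexity gives only non-strict monotonicity. For strictness, note that $h''(u)$ is the variance of $2k$ under the weights $a_k e^{2ku}$, and is therefore positive.
\item At $u=0$ (where $f'(1)=+\infty$) termwise differentiation fails. Get convexity there by continuity, or directly from H\"older's inequality.
\end{itemize}

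Your reading of ``permittivity deviates from unity'' as ``$n_\mathrm{r}$ deviates from unity'' is the right one, since $\varepsilon_\mathrm{r}$ and $\mu_\mathrm{r}$ enter only through $n_\mathrm{r}$.

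As a bonus, the substitution $x\to S_\mathrm{c}x$, $n_\mathrm{r}\to n_\mathrm{r}/S_\mathrm{c}$ puts the general-$S_\mathrm{c}$ formula into exactly your form, provided $S_\mathrm{c}\pi/N_\lambda\leqslant\pi/2$. That makes Statement~\ref{Stmnt:Magic-Courant} and the sign change at $n_\mathrm{r}=S_\mathrm{c}$ immediate.
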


\begin{corollary}
It is easy to calculate the limit of the ratio~\eqref{eq:FDTD-Dispersion}, which, regardless of the values of~$n_\mathrm{r}$ and~$S_\mathrm{c}$, is equal to
\begin{equation}
\lim_{N_\lambda\rightarrow+\infty} \frac{\tilde{c}_\mathrm{p}}{c_\mathrm{p}} = 1.
\end{equation}
This means that the accuracy of the FDTD calculation improves with increasing~$N_\lambda$.
\end{corollary}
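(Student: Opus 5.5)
The plan is to compute the limit of the right-hand side of the relation established in Proposition~\ref{Stmnt:Phase-Velocity} directly. I will treat $n_\mathrm{r}$ and $S_\mathrm{c}$ as fixed positive constants from the domain of definition~\eqref{eq:Definition-Area}, and substitute $x=\pi/N_\lambda$, so that $N_\lambda\to+\infty$ becomes $x\to0^+$. The ratio then reads
\begin{equation*}
\frac{\tilde{c}_\mathrm{p}}{c_\mathrm{p}} = \frac{n_\mathrm{r}\,x}{\arcsin\left[\cfrac{n_\mathrm{r}}{S_\mathrm{c}}\sin(S_\mathrm{c}x)\right]}.
\end{equation*}

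First I must make sure the expression is well defined for all sufficiently large $N_\lambda$. The argument of the arcsine, $\frac{n_\mathrm{r}}{S_\mathrm{c}}\sin(S_\mathrm{c}x)$, tends to $0$ as $x\to0^+$. Hence it lies in $(0,1)$ for all $x$ small enough, whatever the values of $n_\mathrm{r}$ and $S_\mathrm{c}$. So even if the ratio is undefined for small $N_\lambda$ (for example, when $n_\mathrm{r}>S_\mathrm{c}$ and the argument exceeds unity), this does not affect the limit. This is the only point needing care, since the corollary claims independence of $n_\mathrm{r}$ and $S_\mathrm{c}$.

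Next I will use the standard equivalences $\sin u\sim u$ and $\arcsin u\sim u$ as $u\to0$. Applying them in sequence gives
\begin{equation*}
\arcsin\left[\frac{n_\mathrm{r}}{S_\mathrm{c}}\sin(S_\mathrm{c}x)\right]\sim\frac{n_\mathrm{r}}{S_\mathrm{c}}\sin(S_\mathrm{c}x)\sim n_\mathrm{r}x .
\end{equation*}
Therefore the ratio tends to $n_\mathrm{r}x/(n_\mathrm{r}x)=1$. One could equally apply L'H\^opital's rule once, but the equivalence argument is cleaner.

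Optionally, I would expand to the next order. Using $\sin u=u-u^3/6+\dots$ and $\arcsin u=u+u^3/6+\dots$ gives
\begin{equation*}
\frac{\tilde{c}_\mathrm{p}}{c_\mathrm{p}}=1-\frac{\pi^2}{6N_\lambda^2}\left(n_\mathrm{r}^2-S_\mathrm{c}^2\right)+O(N_\lambda^{-4}).
\end{equation*}
This confirms the claimed improvement of accuracy with increasing $N_\lambda$ (monotone for large $N_\lambda$) and the second-order behaviour seen in the example. It also shows the sign pattern (lag for $n_\mathrm{r}>S_\mathrm{c}$, lead for $n_\mathrm{r}<S_\mathrm{c}$) observed in Fig.~\ref{fig:3}. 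There is no real obstacle beyond the domain check in the first step.
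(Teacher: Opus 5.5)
Your proposal is correct and follows the route the paper has in mind. The paper only asserts that the limit is ``easy to calculate'', in the same spirit as its earlier remark that keeping first-order Taylor terms of the sines recovers $\tilde{\beta}=\omega\sqrt{\varepsilon\mu}$; your substitution $x=\pi/N_\lambda$ with $\sin u\sim u$, $\arcsin u\sim u$ is exactly that computation. Your extra steps are also correct and fill in what the paper leaves implicit: the check that the arcsine argument eventually lies in $(0,1)$, and the expansion $1-\frac{\pi^2}{6N_\lambda^2}(n_\mathrm{r}^2-S_\mathrm{c}^2)+O(N_\lambda^{-4})$, which supports the ``accuracy improves'' claim, the fourfold error reduction in the Example, and the lag/lead pattern in Fig.~\ref{fig:3}.
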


\begin{remark}
The decrease in the accuracy of the FDTD calculation in a medium with~$n_\mathrm{r}>1$ is due to the fact that the wavelength in such a medium is shorter than in vacuum, and a small discretization of the wavelength~$N_\lambda$ is not enough for a correct calculation.
However, a simple transfer of this statement to the case of a medium with~$\varepsilon_\mathrm{r}<1$ is not so obvious and requires clarification.
\end{remark}

\begin{corollary}
In addition, Fig.~\ref{fig:3} indicates that high-frequency components of a wave packet in media with \mbox{$\varepsilon_\mathrm{r}>1$} tend to lag behind, while in media with \mbox{$\varepsilon_\mathrm{r}<1$} this feature changes to the opposite~--- high-frequency components of the wave packet in the Yee grid propagate with a higher phase velocity than is actually the case.
In other words, in media optically denser compared to vacuum, the FDTD calculation leads to the occurrence of numerical dispersion having the character of anomalous dispersion~\cite{Landsberg}.
And conversely~--- when modeling optically less dense media, the influence of FDTD discretization corresponds to the behavior of normal dispersion (in this case, the ``red'' components of the wave packet ``overtake the blue'' ones).
\end{corollary}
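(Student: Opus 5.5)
The claim is a statement about the sign of the dispersion error as a function of frequency, so I will analyse \eqref{eq:FDTD-Dispersion} directly rather than rely on Fig.~\ref{fig:3}. Higher-frequency components correspond to smaller $N_\lambda$. I will therefore set $x=\pi/N_\lambda\in(0,\pi/2]$ (which follows from $N_\lambda\geqslant2$) and write the ratio as
\begin{equation*}
R(x)=\frac{n_\mathrm{r}x}{\arcsin\left[\cfrac{n_\mathrm{r}}{S_\mathrm{c}}\sin(S_\mathrm{c}x)\right]}.
\end{equation*}
The statement then splits into two parts. First, I must determine the sign of $R-1$. Second, I must show that this deviation grows as $x$ increases, i.e.\ as the component becomes more high-frequency. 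Throughout I keep $S_\mathrm{c}=1$, as in Fig.~\ref{fig:3}, and only afterwards note the general form of the condition. I also assume $x$ lies in the range where the arcsine argument does not exceed $1$.

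For the sign, $R<1$ is equivalent to $\arcsin\left[n_\mathrm{r}\sin(S_\mathrm{c}x)/S_\mathrm{c}\right]>n_\mathrm{r}x$. Setting $f(u)=\arcsin(n_\mathrm{r}\sin u)$ and $g(u)=n_\mathrm{r}u$ with $u=S_\mathrm{c}x$ (and $S_\mathrm{c}=1$), both functions vanish at $u=0$ and both have derivative $n_\mathrm{r}$ there. Comparing derivatives gives
\begin{equation*}
f'(u)=\frac{n_\mathrm{r}\cos u}{\sqrt{1-n_\mathrm{r}^2\sin^2u}},
\end{equation*}
and $f'(u)>n_\mathrm{r}$ holds if and only if $\cos^2u>1-n_\mathrm{r}^2\sin^2u$, that is, $n_\mathrm{r}^2>1$. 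Hence $R<1$ when $n_\mathrm{r}>1$, which is lag, and $R>1$ when $n_\mathrm{r}<1$, which is lead. For general $S_\mathrm{c}$ the same computation gives the criterion $n_\mathrm{r}>S_\mathrm{c}$ versus $n_\mathrm{r}<S_\mathrm{c}$.

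For monotonicity I would show that $h(u)=f(u)/u$ is increasing when $n_\mathrm{r}>1$ and decreasing when $n_\mathrm{r}<1$. Since $h'$ has the sign of $uf'(u)-f(u)$, it suffices to show that this sign is fixed. Its derivative is $uf''(u)$, so the task reduces to the sign of $f''$. Expanding $f$ in a Taylor series, the cubic term is $n_\mathrm{r}(n_\mathrm{r}^2-1)u^3/6$, which indicates the correct sign near $u=0$. To get it on the whole range I would differentiate $f'$ in closed form. Consequently $R$ moves monotonically away from $1$ as $N_\lambda$ decreases, in the direction fixed by the sign part.

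The physical interpretation then follows. Write $\tilde{c}_\mathrm{p}(\omega)$ and use the definition of normal and anomalous dispersion through the sign of $d\tilde{c}_\mathrm{p}/d\omega$, equivalently of $d\tilde n/d\omega$ with $\tilde n=c/\tilde{c}_\mathrm{p}$, following \cite{Landsberg}. With $n_\mathrm{r}>1$ the phase velocity decreases with frequency, and with $n_\mathrm{r}<1$ it increases; the classification stated in the corollary is read off from this monotonicity result. The main obstacle I expect is the global sign of $f''$ on the admissible interval, particularly near the point where the arcsine argument approaches $1$, since there the closed-form derivative must be checked carefully rather than read off the Taylor term.
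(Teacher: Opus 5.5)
Your analytic route is more than the paper offers, since the paper simply reads the behaviour off Fig.~3. The first two parts of your argument are sound. The derivative comparison correctly gives lag for $n_\mathrm{r}>S_\mathrm{c}$ and lead for $n_\mathrm{r}<S_\mathrm{c}$; note that the governing quantity is $n_\mathrm{r}$, not $\varepsilon_\mathrm{r}$ as the statement says. The obstacle you expect in the monotonicity step is not there. One more differentiation (with $S_\mathrm{c}=1$) gives
\[
f''(u)=\frac{n_\mathrm{r}\,(n_\mathrm{r}^2-1)\sin u}{\left(1-n_\mathrm{r}^2\sin^2u\right)^{3/2}},
\]
whose sign is that of $n_\mathrm{r}^2-1$ on the whole admissible interval. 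Hence $\tilde{c}_\mathrm{p}/c_\mathrm{p}$ moves monotonically away from $1$ as $N_\lambda$ decreases.

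The genuine gap is your last step, where you say the classification in the corollary can be ``read off''. Apply the definition you cite: normal dispersion means $d\tilde n/d\omega>0$, i.e.\ $d\tilde{c}_\mathrm{p}/d\omega<0$, so red components travel faster than blue ones.
\begin{itemize}
\item For $n_\mathrm{r}>1$ you have shown that the grid phase velocity decreases with frequency. That is \emph{normal} dispersion (red overtakes blue), not anomalous as the corollary claims.
\item For $n_\mathrm{r}<1$ the grid phase velocity increases with frequency. That is anomalous dispersion, with blue overtaking red. This contradicts both the label ``normal'' and the parenthetical ``red overtakes blue'' that the corollary attaches to optically less dense media.
\end{itemize}
That parenthetical is in fact inconsistent with the corollary's own first sentence, which you have just proved.

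Switching to group-velocity dispersion does not help. $d^2\tilde\beta/d\omega^2$ is a positive multiple of $f''$, so it gives the same assignment. This step therefore cannot be completed as written. What your argument actually establishes is the lag/lead claim, together with the conclusion that the normal/anomalous labels in the second and third sentences are swapped relative to the standard convention. You should state that explicitly rather than assert that the stated classification follows.
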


\begin{remark}
The structure of the denominator~\eqref{eq:FDTD-Dispersion} has an obvious resemblance to the form of dispersion equations obtained in problems of wave propagation in media with periodic inhomogeneities~\cite{Vinogradova,Kotkin,Karlov,Flugge}.
Namely, the function appears here
\begin{equation}\label{eq:Phi}
\varphi(N_\lambda, S_c, n_\mathrm{r}) =
\frac{n_\mathrm{r}}{S_\mathrm{c}} \sin \frac{\pi S_\mathrm{c}}{N_\lambda},
\end{equation}
the character of which determines the passbands and stopbands of the Yee grid.
Indeed, it is easy to understand that the region of parameter values $(N_\lambda, S_\mathrm{c}, n_\mathrm{r})$ in which the condition \mbox{$\left|\varphi(N_\lambda, S_c, n_\mathrm{r})\right| > 1$} is satisfied corresponds to the non-transmission of waves, since the expression $\arcsin\varphi$ in this case has no meaning in real values.
\end{remark}

A similar picture of the phenomenon takes place in the Kronig--Penney model, as well as in the Hill and Mathieu equations describing parametric oscillations~\cite{Vinogradova,Kotkin,Karlov,Flugge}, as well as wave propagation in systems with spatial periodicity in the arrangement of inhomogeneities.
As a periodic structure in our case (in the FDTD method), the Yee computational grid itself acts directly.
In this case, the transmission and non-transmission of waves (the quantum-mechanical analog~--- allowed and forbidden bands) is determined precisely by the quantities~$N_\lambda$, $S_\mathrm{c}$ and~$n_\mathrm{r}$.

\begin{figure}
\center
\includegraphics[width=0.75\linewidth]{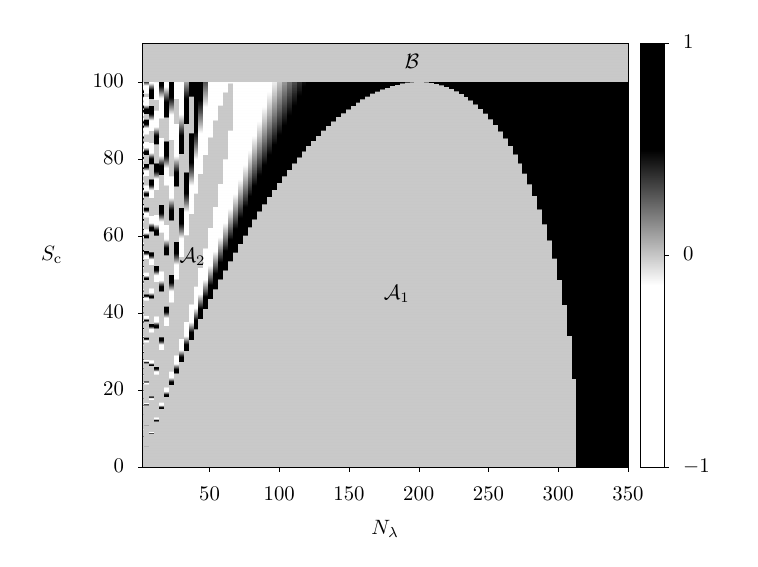}
\caption{$(N_\lambda,S_\mathrm{c})$-diagram of Yee grid bandwidths when \mbox{$n_\mathrm{r}=100$}.}
\label{fig:4}
\end{figure}

Fig.~\ref{fig:4} illustrates the $(N_\lambda,S_\mathrm{c})$-diagram of the Yee grid passbands, constructed on the basis of the function~\eqref{eq:Phi} for a relatively large value of the refractive index of the medium \mbox{$n_\mathrm{r}=100$}.
The regions of the stopbands~$\left\{\mathcal{A}_i\right\}_{i=1}^{\mathcal{N}}$, corresponding to the condition \mbox{$\left|\varphi(N_\lambda, S_c, n_\mathrm{r})\right| > 1$}, are shown in this diagram by a monotone gray shading.
The total number of such regions~$\mathcal{N}$ (the first two of them are marked in Fig.~\ref{fig:4}) depends on the value of the refractive index of the medium~$n_\mathrm{r}$, and grows with its increase.
The passbands are shown in this diagram by a gradient fill, changing from black (\mbox{$\varphi=+1$}) to white (\mbox{$\varphi=-1$}).
It is within the framework of the latter regions that the dispersion relation in the form~\eqref{eq:FDTD-Dispersion} makes sense.

Also in Fig.~\ref{fig:4} the region~$\mathcal{B}$ defined by the inequality \mbox{$S_\mathrm{c}>n_\mathrm{c}$} is highlighted in gray, within which the FDTD method also cannot provide a correct numerical solution of the problem of wave propagation in a homogeneous nondispersive medium.
The arguments confirming the validity of this statement are given in the next section of the article when discussing Fig.~\ref{fig:6}.

\section{Relation of numerical dispersion with the Courant number}

Now, after the preliminary study, let us discuss the algorithm for correcting the numerical dispersion by a special choice of the Courant number.
Its possibility is based on the relation~\eqref{eq:FDTD-Dispersion}, from which the following result directly follows.

\begin{proposition}\label{Stmnt:Magic-Courant}
For any given~$\varepsilon_\mathrm{r}$ and~$\mu_\mathrm{r}$ belonging to the domain of definition~\eqref{eq:Definition-Area}, it is always possible to eliminate the computational errors associated with numerical dispersion by setting the Courant number equal to
\begin{equation}\label{eq:Magic-Courant}
S_\mathrm{c} = \sqrt{\varepsilon_\mathrm{r} \mu_\mathrm{r}}.
\end{equation}
\end{proposition}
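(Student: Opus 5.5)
The plan is to substitute the choice \eqref{eq:Magic-Courant} directly into the phase-velocity relation \eqref{eq:FDTD-Dispersion} of Proposition~\ref{Stmnt:Phase-Velocity}. We then show that the ratio $\tilde{c}_\mathrm{p}/c_\mathrm{p}$ is identically equal to unity for every admissible $N_\lambda$. Since numerical dispersion is, by definition, the dependence of the grid phase velocity on frequency (equivalently on $N_\lambda$), the identity $\tilde{c}_\mathrm{p}\equiv c_\mathrm{p}$ for all frequencies means that the grid reproduces the nondispersive relation \eqref{eq:Dispersion-Classic} exactly, i.e.\ $\tilde{\beta}=\beta$.

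First, with $S_\mathrm{c}=n_\mathrm{r}=\sqrt{\varepsilon_\mathrm{r}\mu_\mathrm{r}}$, the prefactor $n_\mathrm{r}/S_\mathrm{c}$ in the function $\varphi$ of \eqref{eq:Phi} equals $1$. Hence
\[
\varphi = \sin\frac{\pi n_\mathrm{r}}{N_\lambda},
\]
so that $|\varphi|\leqslant 1$. Every pair $(N_\lambda,n_\mathrm{r})$ therefore lies in a passband, and the $\arcsin$ in \eqref{eq:FDTD-Dispersion} is real. Second, we need
\[
\arcsin\!\left(\sin\frac{\pi n_\mathrm{r}}{N_\lambda}\right)=\frac{\pi n_\mathrm{r}}{N_\lambda},
\]
after which the numerator and denominator of \eqref{eq:FDTD-Dispersion} coincide and the ratio equals $1$. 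Equivalently, one can argue from \eqref{eq:Yee-Dispersion-Eq}: the coefficient $\Delta_t/(\sqrt{\varepsilon\mu}\,\Delta_x)$ equals $S_\mathrm{c}/n_\mathrm{r}=1$. The dispersion relation then reduces to $\sin(\omega\Delta_t/2)=\sin(\tilde\beta\Delta_x/2)$, and the natural branch gives $\tilde\beta\Delta_x=\omega\Delta_t$. Using $\Delta_t=n_\mathrm{r}\Delta_x/c$, this becomes $\tilde\beta=\omega n_\mathrm{r}/c=\omega\sqrt{\varepsilon\mu}=\beta$.

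The main obstacle is the principal-branch issue in the second step. The identity $\arcsin(\sin x)=x$ holds only for $x\in[-\pi/2,\pi/2]$, i.e.\ for $N_\lambda\geqslant 2n_\mathrm{r}$. For coarser discretization the principal value would return $\pi-x$ (or similar), and the formula would not give unity literally. I would handle this by restricting to the physically meaningful branch, which is the one continuously connected to the continuum limit $N_\lambda\to\infty$ (cf.\ the corollary giving the limit $1$). Along that branch $\tilde\beta\Delta_x/2$ is determined by continuity rather than by the principal $\arcsin$. I would note that for $N_\lambda<2n_\mathrm{r}$ the wave is sampled at fewer than two nodes per wavelength in the medium, which is the Nyquist limit. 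There the grid wavenumber is determined only modulo $2\pi/\Delta_x$ (aliasing), so on the grid the two solutions are indistinguishable. With this branch convention the identity $\tilde\beta=\beta$ holds for all $N_\lambda$ in \eqref{eq:Definition-Area}, which completes the argument.
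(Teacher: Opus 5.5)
Your first step is exactly the paper's proof, which consists only of ``direct substitution of \eqref{eq:Magic-Courant} into \eqref{eq:FDTD-Dispersion}''. You go further by catching what that one-liner skips. With the principal $\arcsin$, \eqref{eq:FDTD-Dispersion} does not return $1$ once $N_\lambda<2n_\mathrm{r}$, and that case lies inside \eqref{eq:Definition-Area}: for $n_\mathrm{r}=2$, $N_\lambda=3$ the formula gives $2$, and for $N_\lambda=2$ the denominator vanishes. Your conclusion $\tilde\beta=\beta$ is nevertheless correct. At $S_\mathrm{c}=n_\mathrm{r}$, $\tilde\beta\Delta_x=\omega\Delta_t$ is a genuine root of \eqref{eq:Yee-Dispersion-Eq} for every $N_\lambda$.

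Two of your justifications for picking that branch are wrong as stated, though.
\begin{itemize}
\item Continuity does not select the branch. $\arcsin(\sin x)$ with $x=\pi n_\mathrm{r}/N_\lambda$ is itself continuous (a triangle wave). At $x=\pi/2$ the roots $y=x$ and $y=\pi-x$ of $\sin y=\sin x$ merge, so both continue continuously. Only smoothness, i.e.\ analytic continuation, singles out $y=x$.
\item The two roots are not indistinguishable on the grid. They give $\tilde\beta=\beta$ and $\tilde\beta=2\pi/\Delta_x-\beta\equiv-\beta$. These differ by $(2\pi/\Delta_x)(1-2n_\mathrm{r}/N_\lambda)$, which is not a multiple of the grid period unless $2n_\mathrm{r}/N_\lambda\in\mathbb{Z}$. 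The second root is the counter-propagating mode, not an alias of the forward one.
\end{itemize}

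A cleaner closure needs no branch convention at all. At $S_\mathrm{c}=n_\mathrm{r}$, the squared relation obtained in the proof of \eqref{eq:Yee-Dispersion-Eq} reads $\sin^2(\tilde\beta\Delta_x/2)=\sin^2(\omega\Delta_t/2)$. Its full solution set is $\tilde\beta\Delta_x\equiv\pm\omega\Delta_t\pmod{2\pi}$. Since $\Delta_t=n_\mathrm{r}\Delta_x/c$, this means $\tilde\beta\equiv\pm\beta$ modulo $2\pi/\Delta_x$. Hence, for every $N_\lambda$, the only grid modes are the sampled exact forward and backward continuum waves. The failure of the principal-branch formula \eqref{eq:FDTD-Dispersion} below $N_\lambda=2n_\mathrm{r}$ only reflects that $\omega/\tilde\beta$ is defined up to this aliasing there.
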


\begin{proof}
It is verified by direct substitution of~\eqref{eq:Magic-Courant} into~\eqref{eq:FDTD-Dispersion}.
\end{proof}

\begin{remark}
The value of the Courant number~\eqref{eq:Magic-Courant} can be called ``magic'', since in this case the phase velocity of the wave in the FDTD grid~$\tilde{c}_\mathrm{p}$ exactly coincides with the real value of the phase velocity~$c_\mathrm{p}$ regardless of the chosen wavelength discretization~$N_\lambda$.
At the same time, in the source~\cite{Schneider:2010}, the value \mbox{$S_\mathrm{c}=1$} is called ``magic'', which is valid only for vacuum, but not in the general case.
Here we also note that our choice~\eqref{eq:Magic-Courant} is equivalent to setting \mbox{$S_\mathrm{c}=1$} used in the book~\cite{Langtangen:2017}.
\end{remark}

Several examples illustrating the idea of applying the ``magic'' Courant number~\eqref{eq:Magic-Courant} to correct the numerical dispersion in nondispersive homogeneous media, limited by the choice of parameters~\eqref{eq:Definition-Area}, are given in Fig.~\ref{fig:5}.

\begin{figure}
\center
\includegraphics[width=0.75\linewidth]{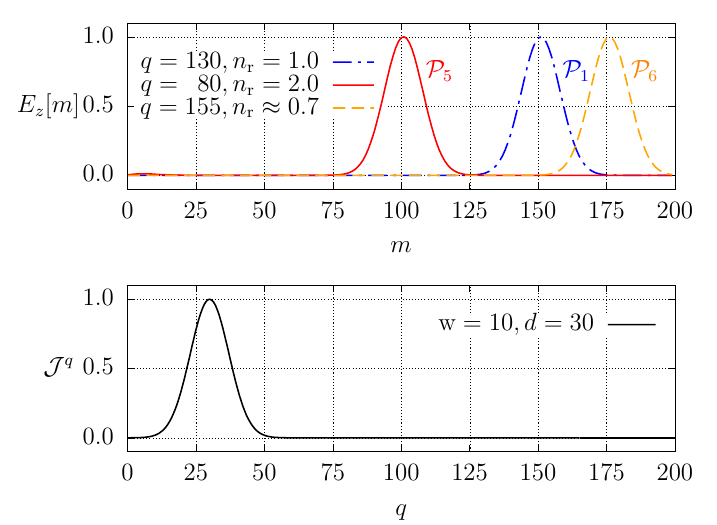}
\caption{Simulating of Gaussian pulse propagation in media optically denser~$\mathcal{P}_5$ (\mbox{$n_\mathrm{r}=2$}) and less dense~$\mathcal{P}_6$ (\mbox{$n_\mathrm{r}=\sqrt{1/2}$}) than vacuum, with using correction of numeric dispersion~(\ref{eq:Magic-Courant}).}
\label{fig:5}
\end{figure}

Fig.~\ref{fig:5} confirms the idea formulated in the form of Statement~\ref{Stmnt:Magic-Courant}~--- numerical dispersion is indeed absent here both in the case of a medium optically denser compared to vacuum (pulse~$\mathcal{P}_5$) and in the case of an optically less dense medium (pulse~$\mathcal{P}_6$).
Fig.~\ref{fig:5} should be compared with Fig.~\ref{fig:1}, in which numerical dispersion is clearly manifested in the shape of the pulse~$\mathcal{P}_2$.

\begin{remark}
Note also that the propagation of the electromagnetic field in space over time (determined, as always, by the value of the discrete index~$q$), presented in Fig.~\ref{fig:5} for the pulse~$\mathcal{P}_5$, occurs with a visible lead compared to the similar process for the pulse~$\mathcal{P}_2$ in Fig.~\ref{fig:1} by a factor of two.
It should be emphasized that this lead is only apparent, since its cause is precisely the twofold difference in the Courant numbers chosen in the case of Fig.~\ref{fig:1} and Fig.~\ref{fig:5}.
In reality (taking into account the correction for~$S_\mathrm{c}$), all the temporal characteristics of the signals in Figs.~\ref{fig:1} and~\ref{fig:5} are identical.
In other words, if we consider the spatial step of the grid~$\Delta_x$ as a fixed parameter that does not change when switching from the simulation in the case \mbox{$S_\mathrm{c}=1$} presented in Fig.~\ref{fig:1} to the simulation with the Courant number~\eqref{eq:Magic-Courant}, then the ``price'' of each time step~$\Delta_t$ (and along with it, the total duration of the simulation) will change by a factor of~$\sqrt{\varepsilon_\mathrm{r}\mu_\mathrm{r}}$.
The same effect (with the replacement of the word ``lead'' by ``lag'') also takes place in the case of optically less dense media (see pulse~$\mathcal{P}_6$) in Fig.~\ref{fig:5}.
\end{remark}

\begin{remark}
We also separately point out that the simulation of the propagation of the pulse~$\mathcal{P}_6$ in a medium optically less dense than vacuum in accordance with the computational algorithm~\eqref{eq:Maxwell-Hy}\,--\,\eqref{eq:Source-Ez} with the choice of the Courant number \mbox{$S_\mathrm{c}=1$} is fundamentally impossible.
It is easy to verify that the calculations in this case very quickly lead to divergences, without giving any useful information, and fundamentally do not describe this particular case.
\end{remark}

The last remark is perfectly illustrated by Fig.~\ref{fig:6}, which shows the avalanche-like process of accumulation of numerical errors during the calculation according to the algorithm~\eqref{eq:Maxwell-Hy}\,--\,\eqref{eq:Source-Ez} in the case of exceeding the Courant number~$S_\mathrm{c}$ compared to~$n_\mathrm{r}$ by only~$0.1\%$.
This illustration is constructed with the same parameters of the signal source~$\mathcal{J}^q$ as in Figs.~\ref{fig:1} and~\ref{fig:5}, for the case of vacuum, although fundamentally the same picture takes place in the case of any other homogeneous nondispersive media from the domain of definition~\eqref{eq:Definition-Area}.

\begin{figure}
\center
\includegraphics[width=0.75\linewidth]{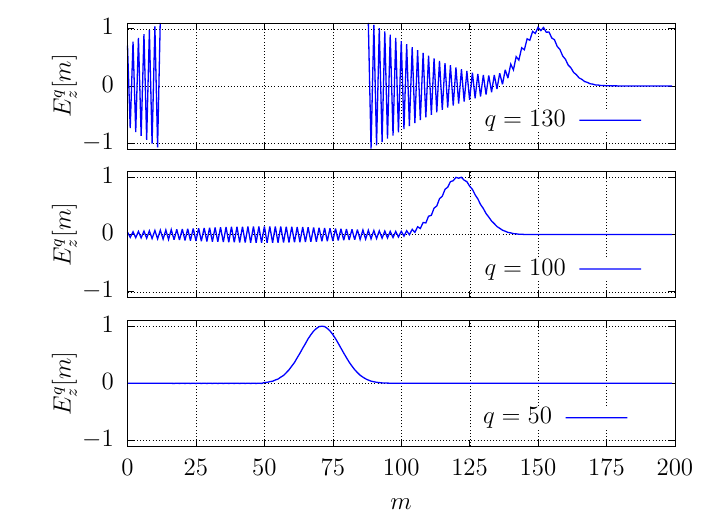}
\caption{Time dynamics of the numerical solution destruction obtained during computation according to the algorithm~(\ref{eq:Maxwell-Hy})\,--\,(\ref{eq:Source-Ez}) when \mbox{$S_\mathrm{c}-n_\mathrm{r}=10^{-3}$}.}
\label{fig:6}
\end{figure}

As can be seen from Fig.~\ref{fig:6}, by the time \mbox{$q=130$} the level of noise that arises randomly as a result of calculations according to the scheme~\eqref{eq:Maxwell-Hy}\,--\,\eqref{eq:Source-Ez} at \mbox{$S_\mathrm{c}-n_\mathrm{r}=10^{-3}$}, twice exceeds the magnitude of the useful signal in absolute value.
At the same time, the spatial extent of the part of the Yee grid affected by this noise is~$3/4$ of the entire length of the computational domain.
The latter circumstance significantly distorts the shape of the trailing edge of the simulated useful signal.
Further, with the subsequent development of the process at \mbox{$q>130$} the useful solution is completely destroyed.

Additional numerical experiments performed by us also show that the effects associated with the accumulation of errors due to self-excitation of the grid always occur when the inequality \mbox{$S_\mathrm{c}>n_\mathrm{r}$} is satisfied.
Thus, at \mbox{$S_\mathrm{c}-n_\mathrm{r}=10^{-4}$} the self-excitation of the grid becomes noticeable already after the passage of the useful signal, but its avalanche-like character is observed in this case as well.
All these observations can be generalized in the form of the following statement.

\begin{proposition}\label{Stmnt:Greater-Courant}
The computational algorithm~\eqref{eq:Maxwell-Hy}\,--\,\eqref{eq:Source-Ez} diverges rapidly and cannot be used to obtain a correct numerical solution of the problem of signal propagation in nondispersive homogeneous media when choosing \mbox{$S_\mathrm{c}>n_\mathrm{r}$}.
\end{proposition}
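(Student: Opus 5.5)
The plan is a von Neumann (Fourier) stability argument built on the grid dispersion relation~\eqref{eq:Yee-Dispersion-Eq}. The idea is to reverse the roles of the variables: instead of fixing a real frequency~$\omega$ and solving for~$\tilde\beta$, fix a real spatial wavenumber~$\tilde\beta$ and solve for the temporal frequency~$\omega$. Any round-off error on the grid can be expanded into spatial Fourier harmonics $e^{-i\tilde\beta m\Delta_x}$ with real $\tilde\beta\in[0,2\pi/\Delta_x)$. Each harmonic evolves under the source-free scheme~\eqref{eq:Maxwell-Hy}, \eqref{eq:Maxwell-Ez} as $e^{i\omega q\Delta_t}$, with $\omega$ given by
\begin{equation*}
\sin\frac{\omega\Delta_t}{2} = \frac{S_\mathrm{c}}{n_\mathrm{r}}\sin\frac{\tilde\beta\Delta_x}{2}.
\end{equation*}
This is~\eqref{eq:Yee-Dispersion-Eq} rewritten with~\eqref{eq:Courant-Num-Def} and $\sqrt{\varepsilon\mu}=n_\mathrm{r}/c$. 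The derivation of the Proposition preceding Statement~\ref{Stmnt:Phase-Velocity} did not use the reality of~$\omega$, so it holds for complex~$\omega$ as well.

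First, write the amplification factor per time step as $g=e^{i\omega\Delta_t}$. Setting $\xi=\frac{S_\mathrm{c}}{n_\mathrm{r}}\sin\frac{\tilde\beta\Delta_x}{2}$, the relation above becomes the quadratic
\begin{equation*}
g^2-2(1-2\xi^2)g+1=0,
\end{equation*}
which follows from $\cos\omega\Delta_t=1-2\sin^2\frac{\omega\Delta_t}{2}$. The product of its two roots is~$1$.

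Next, split into cases. If $|\xi|\le1$, the roots are complex conjugates on the unit circle, so $|g|=1$ and the harmonic neither grows nor decays. If $|\xi|>1$, then $|1-2\xi^2|>1$ and the roots are real with product~$1$. One of them therefore satisfies
\begin{equation*}
|g_+|=2\xi^2-1+2|\xi|\sqrt{\xi^2-1}>1.
\end{equation*}
Equivalently, $\omega\Delta_t=\pi\pm 2i\,\mathrm{arccosh}|\xi|$, a sign-alternating mode growing like $|g_+|^q$. This is the temporal counterpart of the stopbands discussed in the Remark on the function~\eqref{eq:Phi}.

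Then show that such harmonics exist exactly when $S_\mathrm{c}>n_\mathrm{r}$. Since $\max_{\tilde\beta}|\xi|=S_\mathrm{c}/n_\mathrm{r}$, attained at the Nyquist harmonic $\tilde\beta\Delta_x=\pi$, the condition $S_\mathrm{c}>n_\mathrm{r}$ produces a whole band of wavenumbers around $\tilde\beta=\pi/\Delta_x$ with $|\xi|>1$. Round-off noise excites this band generically, so the error grows geometrically in~$q$. At the worst harmonic the rate is
\begin{equation*}
|g_+|=2r^2-1+2r\sqrt{r^2-1}, \qquad r=S_\mathrm{c}/n_\mathrm{r}.
\end{equation*}
This gives the ``avalanche'' of Fig.~\ref{fig:6}, and for $S_\mathrm{c}-n_\mathrm{r}=10^{-3}$ it yields a small but exponential per-step growth consistent with the onset around $q\sim10^2$. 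Conversely, for $S_\mathrm{c}\le n_\mathrm{r}$ all $|g|=1$ (up to a non-generic double root at equality), which shows that $n_\mathrm{r}$ is the sharp threshold.

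The main obstacle is rigor with respect to the actual algorithm, which is not the pure periodic scheme. It contains the TF/SF source~\eqref{eq:Source-Hy}, \eqref{eq:Source-Ez}, absorbing boundaries, and a finite domain, so plain Fourier analysis does not strictly apply. I would handle this by noting that the source terms are bounded, prescribed inputs and do not change the homogeneous propagation operator in the interior. By Duhamel's principle, any growing mode of the interior operator is therefore inherited by the full solution. Concretely, I would construct a locally supported wave packet concentrated near $\tilde\beta=\pi/\Delta_x$, placed far from the boundaries, and show that its amplitude grows geometrically before it reaches the boundaries. This already destroys the solution within a time independent of the boundary treatment. A fully rigorous treatment of the boundary conditions (GKS-type theory) would be beyond what the statement needs, so I would treat it at this heuristic-but-explicit level, as the paper does.
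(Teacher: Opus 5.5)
Your argument is correct, and it takes a genuinely different route from the paper. The paper gives no analysis here. Its ``proof'' is the numerical evidence of Fig.~\ref{fig:6} together with the $S_\mathrm{c}-n_\mathrm{r}=10^{-4}$ run, i.e.\ a direct test of the full algorithm \eqref{eq:Maxwell-Hy}--\eqref{eq:Source-Ez} with its TF/SF source and second-order absorbing boundaries. You instead do a von Neumann analysis of the interior scheme, with real $\tilde\beta$ and complex $\omega$. The per-harmonic amplification matrix acting on $(E_z^q,H_y^{q-1/2})$ does have trace $2(1-2\xi^2)$ and determinant $1$. So your quadratic, the worst factor $|g_+|=(r+\sqrt{r^2-1})^2$ at $\tilde\beta\Delta_x=\pi$, and the threshold $S_\mathrm{c}=n_\mathrm{r}$ all check out; that threshold is just the CFL condition for wave speed $c/n_\mathrm{r}$. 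Your approach gives several things the experiment cannot:
\begin{itemize}
\item it explains why the threshold is exactly $n_\mathrm{r}$;
\item it predicts the sign-alternating space--time checkerboard form of the noise and its growth rate;
\item it resolves the paper's own puzzle about region $\mathcal{B}$ in Fig.~\ref{fig:4}. The function $\varphi$ fixes a real $\omega$ and asks whether $\tilde\beta$ is real, so it cannot see an instability that lives at real $\tilde\beta$ with complex $\omega$.
\end{itemize}
What the paper's experiment gives in return is a direct check of the actual finite algorithm, including boundaries and source.

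Your last paragraph needs tightening in two places.

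First, Duhamel's principle covers the additive TF/SF corrections \eqref{eq:Source-Hy}, \eqref{eq:Source-Ez}. It does not cover the absorbing boundaries, because they change the operator itself. Growing interior modes are therefore not automatically inherited by the bounded problem. For example, with reflecting walls $N$ cells apart the admissible wavenumbers are $k\pi/(N\Delta_x)$, and the finite scheme is actually stable for $1<S_\mathrm{c}/n_\mathrm{r}<\sec(\pi/(2N))$. Your domain-of-dependence construction is the right fix. On a fixed grid, however, it only yields a finite amplification, of order $|g_+|^{D}$ for a packet $D$ cells from the boundary. The rigorous conclusion is therefore Lax instability, and hence non-convergence, under grid refinement at fixed $S_\mathrm{c}/n_\mathrm{r}>1$.

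Second, your consistency check against Fig.~\ref{fig:6} is too quick. For $S_\mathrm{c}-n_\mathrm{r}=10^{-3}$ you have $|g_+|\approx1.094$, so a double-precision round-off seed would become visible only after roughly $400$ steps. The onset near $q=130$ thus points to a much larger seed, presumably from the source or boundary treatment. This does not affect the qualitative conclusion.
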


\begin{proof}
An exhaustive argument for the validity of this statement at the ``physical level of rigor'' is given above when discussing Fig.~\ref{fig:6}.
\end{proof}

It is by applying this statement that the presence of the region~$\mathcal{B}$ shown in Fig.~\ref{fig:4} is explained, although from a formal point of view the function~\eqref{eq:Phi} does not exceed unity in absolute value in this case.

\begin{remark}
At the same time, we note that the choice of values \mbox{$S_\mathrm{c}\leqslant n_\mathrm{r}$}, consistent with the condition \mbox{$\left|\varphi(N_\lambda,S_\mathrm{c},n_\mathrm{r})\right|\leqslant1$}, does not lead to such dramatic consequences as those that take place in Statement~\ref{Stmnt:Greater-Courant}.
\end{remark}

An important consequence of the detailed study carried out in this way, the main results of which are Statements~\ref{Stmnt:Magic-Courant} and~\ref{Stmnt:Greater-Courant}, is one more fact.

\begin{corollary}
Setting the Courant number in the form~\eqref{eq:Magic-Courant} is the only possible optimal choice from the point of view of applying the numerical FDTD algorithm~\eqref{eq:Maxwell-Hy}\,--\,\eqref{eq:Source-Ez} to model the propagation of signals in nondispersive homogeneous media.
\end{corollary}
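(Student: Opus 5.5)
The corollary follows by combining Statements~\ref{Stmnt:Magic-Courant} and~\ref{Stmnt:Greater-Courant} with the explicit form of the phase-velocity ratio~\eqref{eq:FDTD-Dispersion}. Fix $n_\mathrm{r}=\sqrt{\varepsilon_\mathrm{r}\mu_\mathrm{r}}$ in the domain~\eqref{eq:Definition-Area}. Call a Courant number \emph{optimal} if it satisfies two conditions. First, the algorithm~\eqref{eq:Maxwell-Hy}\,--\,\eqref{eq:Source-Ez} does not self-excite. Second, $\tilde{c}_\mathrm{p}/c_\mathrm{p}=1$ for every admissible $N_\lambda$, so that no numerical dispersion distorts a wave packet of arbitrary spectral content. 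The plan is to split the half-line $S_\mathrm{c}\in(0,+\infty)$ into three parts: $S_\mathrm{c}>n_\mathrm{r}$, $S_\mathrm{c}=n_\mathrm{r}$ and $S_\mathrm{c}<n_\mathrm{r}$.

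For $S_\mathrm{c}>n_\mathrm{r}$ we simply invoke Statement~\ref{Stmnt:Greater-Courant}: the scheme diverges, so no such value is admissible. For $S_\mathrm{c}=n_\mathrm{r}$, Statement~\ref{Stmnt:Magic-Courant} gives $\tilde{c}_\mathrm{p}=c_\mathrm{p}$ for every $N_\lambda$. The scheme is also stable, since $|\varphi|=|\sin(\pi/N_\lambda)|\leqslant1$ and we are on the boundary of the region~$\mathcal{B}$ rather than inside it. Hence this value is optimal, and it remains to exclude the third part.

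For $0<S_\mathrm{c}<n_\mathrm{r}$ we show that~\eqref{eq:FDTD-Dispersion} differs from unity for at least one, in fact every finite, $N_\lambda$. Write $a=\pi/N_\lambda\in(0,\pi/2]$ and $k=n_\mathrm{r}/S_\mathrm{c}>1$. Equality $\tilde{c}_\mathrm{p}=c_\mathrm{p}$ means
\begin{equation*}
\arcsin\bigl(k\sin(S_\mathrm{c}a)\bigr)=k S_\mathrm{c}a ,
\end{equation*}
that is, $\sin(kx)=k\sin x$ with $x=S_\mathrm{c}a$, provided the argument of the arcsine lies in $[-1,1]$. If instead $|\varphi|>1$, then $N_\lambda$ falls in a stopband $\mathcal{A}_i$ and the wave does not propagate at all, which is already non-optimal. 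Otherwise we use the strict inequality $|\sin(kx)|<k|\sin x|$ for $k>1$ and $x\neq0$ with $\sin x\neq0$. This follows from strict concavity of $\sin$ on $(0,\pi]$ for small $x$, and in general from $|\sin kx|\leqslant1$ together with the elementary estimate. The inequality shows the two sides never coincide, so the numerical dispersion is nonzero. This agrees with curves 1--3 in Fig.~\ref{fig:3}.

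The main obstacle is precisely this strict inequality $|\sin(kx)|<k|\sin x|$ for all non-integer-multiple-of-$\pi$ values $x$ and all real $k>1$, not just integer $k$. I would first try induction-free reasoning. For $kx\leqslant\pi/2$, concavity of $\sin$ gives $\sin(kx)<k\sin x$ directly. For larger $kx$ with $\sin x$ not small, $k\sin x$ already exceeds~$1$. The difficult middle range needs a careful check that $x<\pi/2$ there, which holds since $x=S_\mathrm{c}\pi/N_\lambda<n_\mathrm{r}\pi/2$. This last bound alone may be insufficient if $n_\mathrm{r}$ is large, so additional case analysis may be required. Finally, uniqueness together with the three-case split yields that~\eqref{eq:Magic-Courant} is the only optimal choice.
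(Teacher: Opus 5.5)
Your three-way split of $S_\mathrm{c}\in(0,+\infty)$ is the right structure. It is actually more complete than the paper's proof, which takes optimality from Statement~\ref{Stmnt:Magic-Courant} and uniqueness from Statement~\ref{Stmnt:Greater-Courant} alone and leaves $S_\mathrm{c}<n_\mathrm{r}$ implicit. Your treatment of that third case, however, has a genuine gap. The lemma you plan to prove, $|\sin(kx)|<k|\sin x|$ for all real $k>1$ with $\sin x\neq0$, is false for every non-integer $k$: as $x\to\pi^-$, $k|\sin x|\to0$ while $|\sin(kx)|\to|\sin(k\pi)|>0$. Worse, the equation $\sin(kx)=k\sin x$ to which you reduce $\tilde c_\mathrm{p}=c_\mathrm{p}$ has genuine solutions inside the parameter domain. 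For $k=3/2$, the function $\sin(3x/2)-\tfrac32\sin x$ equals $-1$ at $x=\pi$ and $3\sqrt3/4$ at $x=4\pi/3$. Taking $N_\lambda=2$ and $S_\mathrm{c}=2n_\mathrm{r}/3$ gives $x=\pi n_\mathrm{r}/3$, so a root is attained for some $n_\mathrm{r}\in(3,4)$. At that root $|\varphi|=|\sin(kx)|\leqslant1$, so the stopband argument does not rescue you. No further case analysis can close the ``middle range'', because the information was already lost in the step ``that is, $\sin(kx)=k\sin x$''.

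What that step discards is the range of the arcsine. In~\eqref{eq:FDTD-Dispersion}, $\tilde\beta\Delta_x/2=\arcsin\varphi\in[-\pi/2,\pi/2]$, whereas $\beta\Delta_x/2=kx=\pi n_\mathrm{r}/N_\lambda>0$. If $kx>\pi/2$, equality is impossible outright. If $kx\leqslant\pi/2$, then $0<x<kx\leqslant\pi/2$, and strict decrease of $\sin t/t$ on $(0,\pi]$ gives $\sin(kx)<k\sin x=\varphi$. When $\varphi\leqslant1$ this yields $\arcsin\varphi>kx$, i.e.\ $\tilde c_\mathrm{p}/c_\mathrm{p}<1$, which is exactly the lag of curves 1--3 in Fig.~\ref{fig:3}. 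Since your notion of optimality requires $\tilde c_\mathrm{p}=c_\mathrm{p}$ for every $N_\lambda$, a single well-resolved value suffices, for instance any $N_\lambda\geqslant2n_\mathrm{r}$. Using such a value also avoids branch and aliasing questions for under-resolved waves.

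Two small corrections to the case $S_\mathrm{c}=n_\mathrm{r}$. First, there $\varphi=\sin(\pi n_\mathrm{r}/N_\lambda)$, not $\sin(\pi/N_\lambda)$. Second, with the principal arcsine, direct substitution into~\eqref{eq:FDTD-Dispersion} returns exactly $1$ only when $N_\lambda\geqslant2n_\mathrm{r}$. That resolved range is therefore the natural one in which to state both cases consistently.
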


\begin{proof}
The optimality of such a choice is due to the fact that it eliminates the numerical dispersion when simulating signals of any shape and spectral composition, not contradicting Statement~2 of~\cite{Makarov:2023}, propagating in a wide class of nondispersive homogeneous media described by the domain of definition~\eqref{eq:Definition-Area}.
The uniqueness follows from Statement~\ref{Stmnt:Greater-Courant}.
\end{proof}

\begin{remark}
The problem that has remained unresolved to date (arising from a careful study of the application of the algorithm~\eqref{eq:Maxwell-Hy}\,--\,\eqref{eq:Source-Ez} to homogeneous nondispersive media optically different from vacuum) is that in this case, in addition to the main signal with the selected direction, a signal of relatively small amplitude of the opposite direction is always observed.
Let us call this last signal for brevity~--- backward~$\mathcal{P}_\mathrm{b}$, in contrast to the original~--- forward~$\mathcal{P}_\mathrm{f}$.
An illustration of this problem is given in Fig.~\ref{fig:7}, which is constructed with the same parameters of the signal source as Figs.~\ref{fig:1} and~\ref{fig:5}.
\end{remark}

Fig.~\ref{fig:7} demonstrates the formation of the backward pulse~$\mathcal{P}_\mathrm{b}$ when simulating the propagation of a Gaussian-shaped signal in media optically less dense than vacuum with the optimal choice of the Courant number.
It can be seen that the character of this phenomenon depends significantly on the value of the relative refractive index~$n_\mathrm{r}$ of the medium.
As the latter tends to zero, this phenomenon can no longer be neglected, since it leads to a distortion of the characteristics of the useful forward signal~$\mathcal{P}_\mathrm{f}$ as well.
At the same time, in the case $n_\mathrm{r}>10^{-1}$ the corresponding numerical error is relatively small and can be estimated not to exceed the level of~$1\%$.

\begin{figure}
\center
\includegraphics[width=0.75\linewidth]{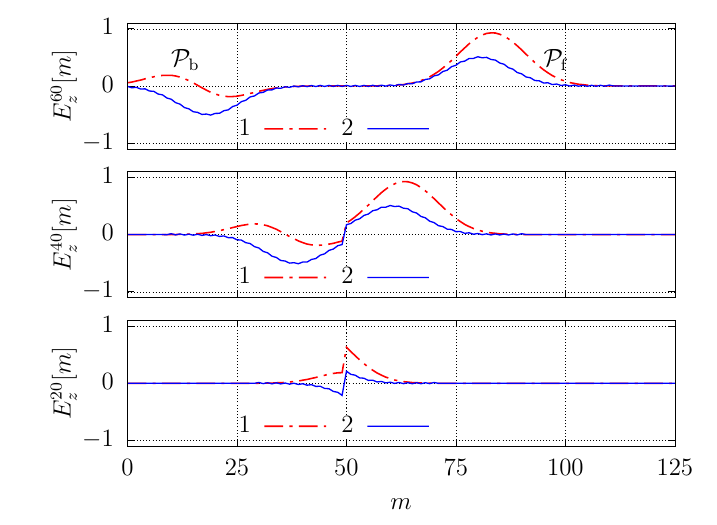}
\caption{Dynamics of forward~$\mathcal{P}_\mathrm{f}$ and backward~$\mathcal{P}_\mathrm{b}$ pulse forming in simulating the propagation of a Gaussian-shaped signal in media optically less dense than vacuum under optimal choice of the Courant number.
The dashed-dotted curve~$1$ corresponds to the case of \mbox{$S_\mathrm{c}=n_\mathrm{r}=10^{-1}$}, the solid line~$2$~--- \mbox{$S_\mathrm{c}=n_\mathrm{r}=10^{-2}$}.}
\label{fig:7}
\end{figure}

\section{Limits of applicability of the main results}

In conclusion of this work, let us discuss the question of the applicability of the results obtained here beyond the domain of definition~\eqref{eq:Definition-Area}.

First, we point out that the interval of values \mbox{$0<\varepsilon_\mathrm{r},\mu_\mathrm{r}<1$} included in the domain of definition~\eqref{eq:Definition-Area}, used when fulfilling the conditions of Statement~\ref{Stmnt:Magic-Courant}, in itself already extends the range of applicability of the computational algorithm developed in this work to the region of exotic media usually not considered in the literature.

Second, we note that our numerical algorithm~\eqref{eq:Maxwell-Hy}\,--\,\eqref{eq:Source-Ez} remains valid also when extending the domain of definition~\eqref{eq:Definition-Area} to the interval \mbox{$\varepsilon_\mathrm{r},\mu_\mathrm{r}\in(-\infty,+\infty)$} under the condition of simultaneous negativity of the medium permittivities \mbox{$\varepsilon_\mathrm{r}\mu_\mathrm{r}>0$}.

Media with simultaneously negative permittivities, as is known~\cite{Veselago,Pendry,Agranovich}, are called left-handed.
In such media, backward waves can propagate~\cite{Shevchenko}, determined by the fact that for them the scalar product of the wave vector~$\mathbf{k}$ (in our work everywhere \mbox{$\mathbf{k}=\beta\mathbf{e}_x$}) and the Umov--Poynting vector~$\mathbf{S}$ is negative
\begin{equation}\label{eq:Backward-Wave}
\left(\mathbf{k} \cdot \mathbf{S}\right) < 0,
\end{equation}
where the energy flux carried by the wave, and determined by the vector~$\mathbf{S}$, is equal to~\cite{Bredov,Kugushev,Shuster,Veselago,Landsberg}
\begin{equation}\label{eq:Umov-Pointing}
\mathbf{S} = \frac{c}{4\pi}\left[\mathbf{E}\times\mathbf{H}\right].
\end{equation}

The validity of such a generalization is demonstrated by Fig.~\ref{fig:8}, which presents the results of comparing the wave packets formed by the same signal source as in Figs.~\ref{fig:1}, \ref{fig:5}\,--\,\ref{fig:7}, recorded at the same time \mbox{$q=100$}, when propagating in a right-handed medium with \mbox{$\varepsilon_\mathrm{r}=\mu_\mathrm{r}=+1$} (bottom half of the figure) and in a left-handed medium with \mbox{$\varepsilon_\mathrm{r}=\mu_\mathrm{r}=-1$} (top half).

\begin{figure}
\center
\includegraphics[width=0.75\linewidth]{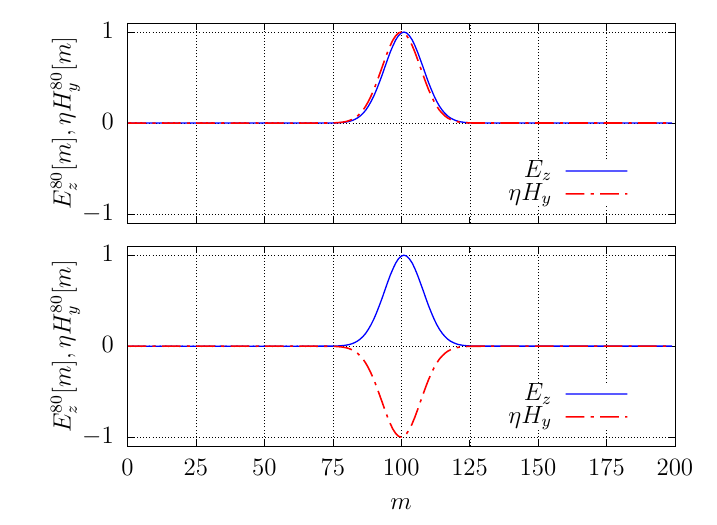}
\caption{Instantaneous snapshots of forward (bottom half) and backward (top half) waves propagating in the Yee grid according to the computational algorithm~(\ref{eq:Maxwell-Hy})\,--\,(\ref{eq:Source-Ez}).}
\label{fig:8}
\end{figure}

It is easy to verify that for the wave packet shown in the upper part of Fig.~\ref{fig:8} the Umov--Poynting vector~\eqref{eq:Umov-Pointing} \mbox{$\mathbf{S}\updownarrows\mathbf{e}_x$}, which is precisely what defines the backward wave according to condition~\eqref{eq:Backward-Wave}.

And finally, let us explain that the validity of the algorithm~\eqref{eq:Maxwell-Hy}\,--\,\eqref{eq:Source-Ez} when extending the domain of definition~\eqref{eq:Definition-Area} to the interval \mbox{$\varepsilon_\mathrm{r},\mu_\mathrm{r}\in(-\infty,+\infty)$} under the condition of non-simultaneous negativity of the medium permittivities \mbox{$\varepsilon_\mathrm{r}\mu_\mathrm{r}<0$} was not investigated in this work.
This is due to the fact that at \mbox{$\varepsilon_\mathrm{r}\mu_\mathrm{r}<0$} the propagation constant~\eqref{eq:Dispersion-Classic} turns out to be an imaginary quantity, which corresponds to a strong attenuation of waves in such media, which as a result turn out to be strongly dispersive (for them, the propagation of plane waves turns out to be impossible, and along with this Statement~\ref{Stmnt:Phase-Velocity} loses its simple meaning, which requires a different formulation in this case).
All this goes beyond the scope of this work.

\section*{Conclusion}

Thus, in this work the phenomenon of numerical dispersion in FDTD modeling of electromagnetic signal propagation in nondispersive homogeneous media has been investigated.
Several statements are formulated that determine the character of this dispersion, and the influence of the Courant number on it is also described.
The optimal value of the Courant number that eliminates the numerical dispersion of wave packets is determined.
The limits of applicability of the developed simulation method are investigated, and for the first time the possibility of its application to media optically less dense than vacuum, as well as to left-handed media, is indicated.

At the same time, the topic of the research still remains quite interesting, since many interesting questions were not investigated in this article, some of which are listed below.

\subsection*{Some open questions}
\begin{enumerate}
\item What can explain from a physical point of view the decrease in the accuracy of the FDTD calculation in media with \mbox{$n_\mathrm{r}<1$}?
An explanation similar to the situation in media for which $n_\mathrm{r}>1$, and based on the decrease in the wavelength in them (and, accordingly, its poor discretization), does not work here.
\item What is the reason for the formation of the backward pulse~$\mathcal{P}_b$ and how can it be eliminated?
Perhaps it is a consequence of some analytical errors made when writing~\eqref{eq:Maxwell-Hy}\,--\,\eqref{eq:Source-Ez} or of the unavoidable errors of floating-point machine representation of numbers in computer calculations.
\item Is it possible to correct numerical dispersion when simulating the propagation of signals in inhomogeneous media?
Can this be achieved by changing the Courant number dynamically during the simulation?
\end{enumerate}

This list of interesting questions does not claim to be complete in any way, and can well be expanded.

\section*{Acknowledgement (state task)}

The work was done in frames of the State task of the Institute of Physics and Mathematics FRC Komi SC UB RAS on the research topic \#~122040400069-8.

\bibliographystyle{unsrtnat}
\bibliography{references}

\end{document}